\documentclass[letter,12pt]{article}
\usepackage{amsmath}
\usepackage{enumerate}
\usepackage{amssymb}
\usepackage{geometry}
\usepackage{setspace}
\usepackage{graphicx}
\usepackage{natbib}
\usepackage{lmodern}
\usepackage{soul}
\usepackage{comment}
\usepackage{ifthen}
\usepackage{xr}
\usepackage{framed}
\setcounter{MaxMatrixCols}{10}
\setlength{\oddsidemargin}{-.150in}
\setlength{\textwidth}{6.8in}
\setlength{\topmargin}{0in}
\setlength{\headheight}{0in}
\setlength{\headsep}{0in}
\setlength{\textheight}{9.1in}

\setlength{\oddsidemargin}{-.150in}
\setlength{\textwidth}{6.8in}
\setlength{\topmargin}{0in}
\setlength{\headheight}{0in}
\setlength{\headsep}{0in}
\setlength{\textheight}{9.1in}

\newtheorem{theorem}{Theorem}

\newtheorem{definition}{Definition}
\newtheorem{lemma}{Lemma}
\newtheorem{proposition}{Proposition}
\newtheorem{remark}{Remark}

\newenvironment{proof}[1][Proof]{\textbf{#1.} }{\ \rule{0.5em}{0.5em}}
\DeclareMathSizes{12}{11}{7.7}{5.5}
\usepackage[inline]{enumitem}
\usepackage{comment}

\DeclareMathOperator*{\esssup}{ess\,sup}

\doublespacing

\graphicspath{ {WD-Graphs/} }

\begin{document}

\title{\textbf{ On the Time-Inconsistent Deterministic Linear-Quadratic Control}
}

\thispagestyle{empty}
\title{On the Time-Inconsistent Deterministic Linear-Quadratic Control  }
\author{Hongyan Cai \and Danhong Chen \and Yunfei Peng \and Wei Wei
\thanks{Cai: School of Mathematics and Statistics, Guizhou University, Guiyang 550025, China E-mail: hycai@gzu.edu.cn. Chen: School of Mathematics and Statistics, Guizhou University, Guiyang 550025, China. E-mail: gs.dhchen19@gzu.edu.cn. Peng: School of Mathematics and Statistics, Guizhou University, Guiyang 550025, China. Email: yfpeng@gzu.edu.cn. Wei: Department of Actuarial Mathematics and Statistics, Heriot-Watt University, Edinburgh, Scotland EH14 4AS, UK. E-mail: wei.wei@hw.ac.uk. Peng gratefully acknowledges financial support through the National Natural Science Foundation of China under grant No. 12061021. Wei gratefully acknowledges financial support through a start-up grant at Heriot-Watt University.} }
%
%
\thispagestyle{empty}
\date{\today}
\maketitle
\thispagestyle{empty}

\vspace*{-0.4cm}
\begin{abstract}
A fundamental theory of deterministic linear-quadratic (LQ) control is the equivalent relationship between control problems, two-point boundary value problems and Riccati equations. In this paper, we extend the equivalence to a general time-inconsistent deterministic LQ problem, where the inconsistency arises from non-exponential discount functions. By studying the solvability of the Riccati equation, we show the existence and uniqueness of the linear equilibrium for the time-inconsistent LQ problem.
\end{abstract}

\noindent {\it Key words}\/: time inconsistency, equilibrium control, intra-personal game, linear-quadratic control, two-point boundary value problem, Riccati equation.

\newpage
\setcounter{page}{1}
\section{Introduction}\label{SecIntroduction}
Experimental and empirical studies suggest that decision makers often behave in a time-inconsistent manner, with individuals acting impatiently in the current moment whilst planning to act
patiently in the future\footnote{In economics, \cite{str1955} first observed that non-constant time preference rates result in time-inconsistent decisions.}. In order to incorporate this evidence on time-inconsistency into mathematical modelling, behavioural scientists and economists suppose that the discount functions in the control models have non-exponential forms\footnote{See \cite{ekelaz2006}, \cite{har1995}, \cite{kar2007}, \cite{lai1997}, \cite{loewenstein1992anomalies}, \cite{bleichrodt2009non} and \cite{ebert2020weighted} for various discount functions. }, and in consequence, the optimal control becomes inconsistent in time. As is standard in the literature on decision making, time-inconsistent problems are often considered within the self-control and intra-personal game theoretic framework and the corresponding equilibria are taken as solutions to such problems.\footnote{See, for example, \cite{phepol1968}; \cite{lai1997}; \cite{odorab1999}; \cite{krusmi2003} and \cite{lutmar2003}.}\par
This paper studies a general deterministic time-inconsistent linear-quadratic (LQ) control problem, where the inconsistency arises from non-exponential discount functions. We demonstrate two main results. First, we show an equivalence relationship between control problems, two-point boundary value problems and Riccati equations. Second, we establish the existence and uniqueness of the linear equilibrium for the time-inconsistent LQ problem. 
We wish these results could shed some light on the study of general time-inconsistent control problems, and justify, at least for some extent, the definition of equilibria over a set of feedback controls. \par 
A fundamental theory of classical (time-consistent) LQ optimal control is the equivalent relationship between control problems, two-point boundary value problems and Riccati equations. 
In the time consistent setting, the equivalence between the three problems is a natural result of the spike variation and the linear-quadratic structure inherent to the problem\footnote{See, for example, Chapter 6 of \cite{yong1999stochastic}.}. 
However, a different pictrue emerges when the time consistency is lost. Due to the time inconsistency stemming from the continuously changing time preferences, a time parameter is involved in the Riccati equation that characterises the LQ problem. As a consequence, the Riccati equation turns out to be a flow of dynamics, which switches continuously over time\footnote{A similar phenomenon has been observed in the literature studying stochastic time-inconsistent LQ problems. See, for example, \cite{hu2012time}. }. This feature places an obstacle in the way of using some powerful techniques in the ODE theory and thus constituting the major difficulty in the study of the time-inconsistent LQ problem.
In the present paper, we extend the equivalence to the time-inconsistent LQ setting. Particularly, using a different method from the spike variation, we obtain a Riccati equation, which admits a symmetric structure and solves the time-inconsistent LQ problem. In contrast to the Riccati equation derived from the spike variation, the Riccati equation obtained in this paper does not involve the time parameter and hence characterising a single dynamics. Using the Banach fixed point theory and the extension technique in the ODE theory, we establish the unique solvability of the Riccati equation. This result, together with the equivalence, yields the existence and the uniqueness of the linear equilibrium for the time-inconsistent LQ problem.\par 
There are studies in the literature on time-inconsistent LQ control within the intra-personal game theoretic framework. \cite{basak2010dynamic} and  \cite{bjomurzho2014} study dynamic mean-variance asset allocation problems which can be formulated within the time-inconsistent LQ framework, where the inconsistency comes from the quadratic term of the expected state and (or) the non-constant risk aversion. In their works, the definition of an equilibrium is based on the notion of feedback control, which is formally proposed by \cite{ekelaz2006} and \cite{bjomur2009}. Particularly, this type of equilibria fit in the results in most of literature on behavioural economics, in which the equilibria are derived based on the recursive method\footnote{See the discussion in \cite{bjomur2009}.}. The search of this type of equilibria is usually conducted through a verification theorem and a complicated Bellman system, of which the uniqueness of the solution remains unknown. As a result, the uniqueness of such an equilibrium is usually absent.
A different type of definition of an equilibrium, which is based on the notion of open loop control, is introduced by \cite{hu2012time}. Moreover, \cite{hu2017time} prove the uniqueness of the open loop equilibrium for a time-inconsistent LQ problem in the one dimensional case\footnote{The time inconsistency of the LQ problem discussed in  \cite{hu2012time} and \cite{hu2017time} also arises from the quadratic term of the expected state.}. The third method to define (or construct) an equilibrium is discrete approximation, which is given by \cite{yong2012deterministic}, where the equilibrium of a continuous time-inconsistent LQ control is defined as the limit of the solutions to a sequence of discrete time problems. In the same line, \cite{dou2020time} construct a feedback equilibrium for a time-inconsistent for a time-inconsistent LQ problem, where the underlying process takes values in a Hilbert space. Another related stream of research is in the literature combining time-inconsistent LQ control with other topics, such as mean field games. The readers could be referred to \cite{bensoussan2013linear}, \cite{yong2017linear} and the reference therein. Finally, it is worth noting that both non-existence and non-uniqueness results for general time-inconsistent control problems have been reported in literature. For a time-inconsistent binary control problem with non-exponential discounting, \cite{tan2018failure} find that an equilibrium may not exists, while the time-consistent counterpart admits a unique optimal solution. For a behavioural portfolio management problem, \cite{ekepir2008} show that multiple linear equilibria can be found over a feedback control set, even though the value functions are obtained from the same ansatz. \par
The remainder  of the paper is organized as follows. Section \ref{Sec:ProblemSetting} introduces the formulations of the time-inconsistent LQ problem and the definition of equilibria. In particular, our definition is consistent with the definition based on the feedback control, proposed by \cite{ekelaz2006} and \cite{bjomur2009}. Section \ref{Sec:MainResult} demonstrates the equivalence between the time-inconsistent LQ problem, the two boundary value ODE system and the Riccati equation. Section \ref{Sec:Riccati} shows the unique solvability of the Riccati equation and thus providing the existence and uniqueness of the linear equilibrium for the equivalent time-inconsistent LQ problem. In section \ref{Sec:Conclusion}, concluding remarks are given.
\section{Problem setting}\label{Sec:ProblemSetting}
Let $T>0$ be the end of a finite time horizon. Throughout the paper, we will use the following notations.
\begin{align*}
L^{p}\left([0,T];\mathbb{R}^{l\times k}\right) = \left\{f:[0,T]\to\mathbb{R}^{l\times k}\vert \int^T_0 \vert f_{ij}(t)\vert^p dt<\infty, 1\le i\le l, 1\le j\le k. \right\}
\end{align*}
\begin{align*}
L^{\infty}\left([0,T];\mathbb{R}^{l\times k}\right) =\left \{f:[0,T]\to\mathbb{R}^{l\times k}\vert \esssup_{t\in[0,T]}| f_{ij}(t)|<\infty, 1\le i\le l, 1\le j\le k. \right\}
\end{align*}
\begin{align*}
 C([0,T]^m; \mathbb{R}^{l\times k})=\left\{f:[0,T]^m\to\mathbb{R}^{l\times k}\vert \text{$f$ is continuous.}\right\}
\end{align*}
\begin{align*}
 C^{1}([0,T]^m; \mathbb{R}^{l\times k})=\left\{f:[0,T]^m\to\mathbb{R}^{l\times k}\vert \text{$f$ and its first order derivative are continuous.}\right\}
\end{align*}
For a real matrix-valued function $O(t)=(o_{ij}(t))\in \mathbb{R}^{l\times k}$ for all $t\in [0,T]^{m} (m=1,2)$, we introduce the following norms,
\begin{eqnarray*}
\left\{\begin{array}{ll}
\|O(t)\|=\max\limits_{1\leq i\leq l}\sum\limits_{j=1}^k|o_{ij}(t)|,\;\;
\|O\|_{L^1}=\max\limits_{1\leq i\leq l}\sum\limits_{j=1}^k\|o_{ij}\|_{L^1([0,T]^{m})}, \;\;\|O\|_{L^{\infty}}=\max\limits_{1\leq i\leq l}\sum\limits_{j=1}^k\|o_{ij}\|_{L^{\infty}([0,T]^{m})}, \\
\|O\|_{C}=\max\limits_{1\leq i\leq l}\sum\limits_{j=1}^k\|o_{ij}\|_{C([0,T]^{m})},\;\;\|O\|_{C^1}=\max\limits_{1\leq i\leq l}\sum\limits_{j=1}^k\left(\|o_{ij}\|_{C([0,T]^{m})}+\|Do_{ij}\|_{C([0,T]^{m})}\right).
\end{array}\right.
\end{eqnarray*}
We suppose the following assumptions hold throughout this paper.
\begin{itemize}
\item[(H1)] $A\in L^{1}\left((0,T);\mathbb{R}^{n\times n}\right)$, $B\in L^{\infty}\left((0,T);\mathbb{R}^{n\times m}\right)$.
\item[(H2)] $M\in C^{1}([0,T]\times[0,T]; \mathbb{R}^{m\times m})$ is a positive  definite symmetric matrix-valued function.
\item[(H3)] $Q\in C^{1}([0,T]\times[0,T];\mathbb{R}^{n\times n}) $ and $G\in C^{1}([0,T];\mathbb{R}^{n\times n}) $ are   positive semi-definite symmetric matrix-valued functions.
\item[(H4)]  $S\in C^{1}([0,T]\times[0,T]; \mathbb{R}^{m\times n})$.
\end{itemize}
 For any $(t,x)\in[0,T)\times \mathbb{R}^{n}$,  the underlying dynamics is governed by the following controlled linear ordinary differential equation (LODE, for short)\footnote{Without any specification, any vector in this paper is a column vector.}
 \begin{equation} \label{1.1}
 \left\{
 \begin{array}{ll}
 \dot{X}(s)=A(s)X(s)+B(s)u(s),& s\in(t,T], \\
 X(t)=x,
 \end{array}
 \right.
 \end{equation}
where the function  $u\in \mathcal{U}[0,T]  \equiv L^2 \left([0,T];\mathbb{R}^m\right)$ is the control and $X$ is the state process valued in $\mathbb{R}^n.$ It follows standard ODE theory that LODE (\ref{1.1}) has a unique solution $X(\cdot) \equiv X^{u}_{t,x}(\cdot)$.
At any time $t$ with the system state $X_t = x$, the cost functional is given by
 \begin{eqnarray} \label{1.2}
 J(t,x;u)&=&\int_{t}^{T}\bigg[\left\langle Q(t,s)X^{u}_{t,x}(s),X^{u}_{t,x}(s)\right\rangle+2\left\langle
S(t,s)X^{u}_{t,x}(s),u(s)\right\rangle\bigg]ds\nonumber\\
&&+\int_{t}^{T}\left\langle M(t,s)u(s),u(s)\right\rangle ds+\left\langle G(t)X^{u}_{t,x}(T),X^{u}_{t,x}(T)\right\rangle.
\end{eqnarray}
As discussed earlier the non-exponential discount functions $Q, S, M, G$ in the cost functional (\ref{1.2}) render
the underlying LQ problem generally time-inconsistent. In this paper we consider
a sophisticated agent who is aware of the time-inconsistency but unable to
control her future actions. In this case, she seeks to find the so-called equilibrium strategies
within the intra-personal game theoretic framework, in which the individual is represented by
different players at different dates.
We now give the precise definition of an equilibrium control $\bar{u}$, which essentially entails
a solution to a game in which no self at any time (or, equivalently in the current setting, at any
state) is willing to deviate from $\bar{u}$.\par 
\begin{definition}\label{DefEquilibrium}
Let $\bar{u}:[0,T]\times \mathbb{R}^{n}\longrightarrow \mathbb{R}^{m}$ be a measurable mapping and satisfies $\bar{u}(\cdot,X(\cdot))\in \mathcal{U}[0,T]$. Define
\begin{equation} \label{1.4}
u^{\varepsilon,v}(s,x)=\left\{\begin{array}{ll}
v,&s\in [t,t+\varepsilon],\\
\bar{u}(s,x),&s\in (t+\varepsilon,T].
\end{array}
\right.
\end{equation}
The control $\bar{u}$ is an equilibrium control if
\begin{equation} \label{1.3}
 \liminf_{\varepsilon\searrow 0} \frac{J\left(t,x;u^{\varepsilon,v}\right)-J(t,x;\bar{u})}{\varepsilon}\geq 0,\;\;\forall (t,x)\in [0, T )\times \mathbb{R}^{n}, v\in \mathbb{R}^{m}.
 \end{equation}
 Furthermore, $\bar{u}$  is called a linear equilibrium control if there is a function $\tilde{u}:[0,T] \longrightarrow\mathbb{R}^{m\times n}$ such that
 \begin{equation*} 
\bar{u}(t,x)=\tilde{u}(t)x \mbox{ for any } (t,x)\in[0,T]\times \mathbb{R}^{n}.
 \end{equation*}
\end{definition}

\section{The equivalence}\label{Sec:MainResult}
In this section, we focus on the equivalent relationship between control problems, two-point boundary value problems and Riccati equations. \par
First, we introduce the following two-point boundary value problem
\begin{equation} \label{1.6}
 	\left\{	
 	\begin{array}{ll}
 	\dot{\bar{X}}(s)=\left[A(s)-B(s)M^{-1}(s,s)S(s,s)\right]\bar{X}(s)-B(s)M^{-1}(s,s)B^{\top}(s)\varphi(s),& s\in(t,T], \\
\dot{\varphi}(s)=-\left[A(s)-B(s)M^{-1}(s,s)S(s,s)\right]^{\top}\varphi(s)\\
\quad-\left[\hat{Q}(s,s)-S^{\top} (s,s) M^{-1}(s,s)S(s,s)\right]\bar{X}(s),& s\in[t,T),\\
\bar{ X}(t)=x,\varphi(T)=G(T) \bar{X}(T),
 	\end{array}
 	\right.
 	\end{equation}
where
\begin{eqnarray}\label{1.7}
	\left\langle\hat{Q}(s,s)\bar{ X}(s),\bar{ X}(s)\right\rangle
&=&\left\langle Q(s,s)\bar{ X}(s),\bar{ X}(s)\right\rangle-\left\langle \dot{G}(s)\bar{ X}(T),\bar{ X}(T)\right\rangle-\int_{s}^{T}\left\langle Q_{s}(s,\tau)\bar{ X}(\tau),\bar{ X}(\tau)\right\rangle d\tau\nonumber\\
&&-\int_{s}^{T}\bigg\langle M_{s}(s,\tau)M^{-1}(\tau,\tau)\left(B^{\top}(\tau)\varphi(\tau)+S(\tau,\tau)\bar{ X}(\tau)\right)-2S_{s}(s,\tau)\bar{ X}(\tau),\\
&&M^{-1}(\tau,\tau)\left(B^{\top}(\tau)\varphi(\tau)+S(\tau,\tau)\bar{ X}(\tau)\right)\bigg\rangle d\tau\nonumber
\end{eqnarray}
for any $s\in[t,T]$ and for any $y\in \mathbb{R}^n$.

Second, the Riccati equation is given by
 	\begin{equation} \label{1.8}
 	\left\{
 	\begin{array}{ll}
\dot{P}(t)+A^{\top}(t)P(t)+P(t)A(t)+\bar{Q}(t,t)\\
-\left[P(t)B(t)+S^{\top}(t,t)\right]M^{-1}(t,t)\left[B^{\top}(t)P(t)+S(t,t)\right]=0,& 0\leq t< T,\\
P(T)=G(T).
 	\end{array}
 	\right.
 	\end{equation}
Here
\begin{eqnarray}\label{1.9}
\bar{Q}(t,t)&=&Q(t,t)-\Phi^{\top}(T,t)\dot{G}(t)\Phi(T,t)-\int_{t}^{T}\Phi^{\top}(s,t)Q_t(t,s)\Phi(s,t)ds\\
&&-\int_{t}^{T}\Phi^{\top}(s,t)\bigg[\Upsilon^{\top}(s)M_t(t,s) \Upsilon(s)-\Upsilon^{\top}(s)S_t(t,s)-S_t^{\top}(t,s)\Upsilon(s)\bigg]\Phi(s,t)ds,\nonumber
\end{eqnarray}
where
\begin{align}\label{Phi1}
\left\{\begin{array}{ll}\Upsilon(s)=M^{-1}(s,s)\left(B^{\top}(s)P(s)+S(s,s)\right), ~~\forall s\in[0,T],\\
\Phi(t,s)=\exp\left(\int_{s}^{t}\left(A(\tau)-B(\tau)\Upsilon(\tau)\right)d\tau\right),~~\forall s,t\in[0,T].
\end{array}\right.
\end{align}
\begin{remark}
There is an abuse of terminology. Compared to the conventional Riccati equations in time-consistent LQ models, the Riccati equation (\ref{1.8}) has an integral term which contains the solution to the equation itself (See (\ref{1.9}) and (\ref{Phi1}).). This feature highlights the difference between time-inconsistent LQ control and its time-consistent counterpart. For the reader's convenience, instead of introducing a new terminology, we still call equation (\ref{1.8}) (equilibrium) Riccati equation.
\end{remark}
As is standard literature on classical time consistent LQ control and Riccati equations, we define the solution to the Riccati differential equation (\ref{1.8}) in $C\left([0,T];\mathbb{R}^{n\times n}\right)$ as follows.
\begin{definition}\label{DefRiccati}
A symmetric matrix-valued function $P\in C\left([0,T];\mathbb{R}^{n\times n}\right)$ is a solution to the equilibrium Riccati differential equation (\ref{1.8}) if $P$ satisfies the following integral equation,
\begin{align*}
P(t) &= G(T) +\int^T_t (A^{\top}(s)P(s)+P(s)A(s)+\bar{Q}(s,s)\nonumber\\&
-\left[P(s)B(s)+S^{\top}(s,s)\right]M^{-1}(s,s)\left[B^{\top}(t)P(s)+S(s,s)\right] )ds
\end{align*}
\end{definition}
The result given as follows establishes the equivalence between the time-inconsistent LQ problem (\ref{1.3}), the two-point boundary value problem (\ref{1.6}) and the Riccati equation (\ref{1.8}).
\begin{theorem}\label{peng-theorem1.1}The following statements are equivalent.
\begin{itemize}
\item[(i)] A linear equilibrium control defined by Definition \ref{DefEquilibrium} exists.
\item[(ii)] For $\forall (t,x)\in [0, T )\times \mathbb{R}^{n},$ the  two-point equilibrium boundary value problem (\ref{1.6})	
 	admits a solution in $C([t,T];\mathbb{R}^n)\times C([t,T];\mathbb{R}^n)$.
 	\item [(iii)] The  equilibrium Riccati equation (\ref{1.8})
 	admits a symmetric solution $P(\cdot)\in C([0,T];\mathbb{R}^{n\times n})$.
\end{itemize}
\end{theorem}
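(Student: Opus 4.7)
My plan is to prove the chain (i) $\Rightarrow$ (ii) $\Rightarrow$ (iii) $\Rightarrow$ (i), using a spike-variation perturbation analysis for the first arrow, a closed-loop Riccati ansatz $\varphi = P\bar X$ for the second, and a verification (reverse-perturbation) argument for the third.

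For (i) $\Rightarrow$ (ii), I fix $(t, x)$ and let $\bar u(s, y) = \tilde u(s) y$ be the given linear equilibrium; denote by $\bar X$ the closed-loop state starting at $(t, x)$ and by $\Psi(s, t)$ the fundamental matrix of $\dot Z = [A + B\tilde u] Z$. A standard ODE estimate yields
\[
X^{u^{\varepsilon, v}}(s) - \bar X(s) = \varepsilon\,\Psi(s, t) B(t)\bigl(v - \bar u(t, x)\bigr) + o(\varepsilon) \quad \text{on } [t+\varepsilon, T].
\]
Substituting into (\ref{1.2}), dividing by $\varepsilon$, and letting $\varepsilon \to 0^+$, the equilibrium condition (\ref{1.3}) reduces, for every $v \in \mathbb{R}^m$, to
\[
2\bigl\langle M(t,t)\bar u(t,x) + S(t,t) x + B^\top(t)\varphi(t),\, v-\bar u(t,x)\bigr\rangle + \bigl\langle M(t,t)(v-\bar u(t,x)), v-\bar u(t,x)\bigr\rangle \geq 0,
\]
where $\varphi(t)$ is the adjoint assembled from the state and terminal sensitivities integrated against $\Psi$. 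Testing with $v = \bar u(t, x) + \delta w$ and $\delta \to 0^\pm$ forces the bracket to vanish, giving $M(t,t)\bar u(t,x) + S(t,t) x + B^\top(t)\varphi(t) = 0$. Differentiating $\varphi$ along the trajectory and using integration by parts to absorb the $t$-derivatives $Q_t, S_t, M_t, \dot G$ then produces the adjoint equation in (\ref{1.6}) with $\hat Q$ exactly in the form (\ref{1.7}); the boundary condition $\varphi(T) = G(T)\bar X(T)$ comes directly from the terminal cost.

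For (ii) $\Rightarrow$ (iii), the key observation is that every coefficient of (\ref{1.6})—including the functional definition (\ref{1.7}) of $\hat Q(s, s)$—involves only data on $[s, T]$ and carries no explicit dependence on the initial time. Thus the BVP is internally consistent: the restriction of any solution on $[t, T]$ to a subinterval $[s, T]$ again solves the BVP starting from $(s, \bar X(s))$. Combined with the scaling/linearity of the BVP system in $(\bar X, \varphi)$, this forces $\varphi(s)$ to depend linearly on $\bar X(s)$, which defines $P(s)$ via $\varphi(s) = P(s)\bar X(s)$. Differentiating this ansatz, substituting (\ref{1.6}), and using $\bar X(\tau) = \Phi(\tau, s)\bar X(s)$ to recognise $\hat Q(s, s)$ as the symmetric matrix $\bar Q(s, s)$ of (\ref{1.9}) yields the Riccati equation (\ref{1.8}) on factoring out $\bar X(s)$ (which ranges over $\mathbb{R}^n$); symmetry is preserved throughout, and $P(T) = G(T)$ follows from the terminal BVP condition. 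For (iii) $\Rightarrow$ (i), given $P$ solving (\ref{1.8}), I set $\bar u(t, x) := -\Upsilon(t) x$ and $\varphi(s) := P(s)\bar X(s)$ along the closed-loop trajectory; a direct differentiation reverses the previous step to show $(\bar X, \varphi)$ satisfies (\ref{1.6}), and running the spike-variation expansion of $J(t, x; u^{\varepsilon, v}) - J(t, x; \bar u)$ in reverse collapses the cost perturbation to $\varepsilon\,\langle M(t, t)(v - \bar u(t, x)), v - \bar u(t, x)\rangle + o(\varepsilon) \geq 0$ by (H2), verifying (\ref{1.3}).

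The main obstacle is the bookkeeping of the perturbation in the first step. The partial derivatives $Q_t, S_t, M_t, \dot G$ generated by the $t$-dependence of the discount functions produce non-standard correction terms that must be arranged, through integration by parts against $\Psi$, into exactly the non-local form of (\ref{1.7}). Showing that the adjoint $\varphi$, originally defined through the cost functional at a fixed initial time, actually satisfies an ODE whose coefficients depend only on the running time $s$ is the delicate point that makes (\ref{1.6}) a single system rather than a $t$-parametrised family—and this structural feature is precisely what enables the reduction of the BVP to the Riccati equation (\ref{1.8}).
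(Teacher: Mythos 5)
Your proposal is correct and follows essentially the same route as the paper: the chain (i)\,$\Rightarrow$\,(ii)\,$\Rightarrow$\,(iii)\,$\Rightarrow$\,(i) via the initial-interval perturbation and first-order condition to identify the feedback gain and adjoint $\varphi=\tilde P\bar X$ (the paper's Proposition \ref{peng-proposition2.1}), the ansatz $\varphi=P\bar X$ to pass from the two-point boundary value problem to the Riccati equation (Proposition \ref{peng-proposition2.2}), and a verification argument that completes the square to $\langle M(t,t)(v-\bar u(t,x)),v-\bar u(t,x)\rangle\geq 0$ (Proposition \ref{peng-proposition2.3}). The paper phrases the first-order condition through strict convexity and uniqueness of the minimiser rather than your two-sided test $v=\bar u(t,x)+\delta w$, but this is an immaterial difference.
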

The proof of Theorem \ref{peng-theorem1.1} is lengthy. For the ease of exposition, we divide the theorem into three propositions as follows.
\begin{proposition}\label{peng-proposition2.1}
Suppose that the time-inconsistent LQ problem in Definition \ref{DefEquilibrium} admits a linear equilibrium control, then the  two-point equilibrium boundary value problem (\ref{1.6})	
admits a solution in $C([t,T];\mathbb{R}^n)\times C([t,T];\mathbb{R}^n)$ for any $t\in[0,T)$.
\end{proposition}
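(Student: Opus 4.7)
The plan is to perform a spike-variation analysis of the equilibrium inequality (\ref{1.3}), introduce a two-parameter adjoint process indexed by the initial time, and identify its diagonal restriction as the required $\varphi$. Throughout, fix $(t,x)\in[0,T)\times\mathbb{R}^n$ and let $\bar u(s,y)=\tilde u(s)y$ be the given linear equilibrium with associated closed-loop trajectory $\bar X(\cdot)$ on $[t,T]$ satisfying $\dot{\bar X}(s)=[A(s)+B(s)\tilde u(s)]\bar X(s)$, $\bar X(t)=x$. For $v\in\mathbb R^m$ and small $\varepsilon>0$ a standard Gr\"onwall estimate, exploiting the feedback structure of $u^{\varepsilon,v}$ on $[t+\varepsilon,T]$, yields $X^\varepsilon(s)-\bar X(s)=\varepsilon Y(s)+o(\varepsilon)$ uniformly on $[t,T]$, where $Y$ solves the closed-loop variational equation with $Y(t)=B(t)[v-\tilde u(t)x]$. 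Collecting the $\varepsilon$-linear part of (\ref{1.2}), the equilibrium inequality becomes the non-negativity of
\begin{align*}
&\langle M(t,t)v,v\rangle-\langle M(t,t)\tilde u(t)x,\tilde u(t)x\rangle+2\langle S(t,t)x,v-\tilde u(t)x\rangle\\
&\quad+2\!\int_t^T\!\langle[Q(t,s)+S^\top(t,s)\tilde u(s)+\tilde u^\top(s)S(t,s)+\tilde u^\top(s)M(t,s)\tilde u(s)]\bar X(s),Y(s)\rangle\,ds+2\langle G(t)\bar X(T),Y(T)\rangle.
\end{align*}

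Next, I collapse the $Y$-dependent contributions into a single point-evaluation at $s=t$ by introducing the two-parameter adjoint $\psi(s;t)$ on $\{0\le t\le s\le T\}$, defined by
\begin{equation*}
\dot\psi(s;t)=-[A(s)+B(s)\tilde u(s)]^\top\psi(s;t)-[Q(t,s)+S^\top(t,s)\tilde u(s)+\tilde u^\top(s)S(t,s)+\tilde u^\top(s)M(t,s)\tilde u(s)]\bar X(s),
\end{equation*}
with terminal condition $\psi(T;t)=G(t)\bar X(T)$. Integration by parts of $\tfrac{d}{ds}\langle\psi(s;t),Y(s)\rangle$ identifies the $Y$-dependent part of the above inequality with $2\langle B^\top(t)\psi(t;t),v-\tilde u(t)x\rangle$. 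Setting $\varphi(t):=\psi(t;t)$ and writing $w:=v-\tilde u(t)x$, the equilibrium inequality reduces to $2\langle M(t,t)\tilde u(t)x+S(t,t)x+B^\top(t)\varphi(t),w\rangle+\langle M(t,t)w,w\rangle\ge 0$ for all $w\in\mathbb R^m$; the linear coefficient must vanish, and invertibility of $M(t,t)$ from (H2) gives $\tilde u(t)x=-M^{-1}(t,t)[S(t,t)x+B^\top(t)\varphi(t)]$, which when substituted into the closed-loop dynamics yields the first equation of (\ref{1.6}).

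Finally, I obtain the adjoint equation for $\varphi(s)=\psi(s;s)$ from $\dot\varphi(s)=\partial_s\psi(s;s)+\partial_t\psi(s;s)$. The $\partial_s$ summand is read off from the defining ODE of $\psi$; after using the first-order condition to eliminate $B^\top(s)\varphi(s)$, it assumes the form $-[A(s)-B(s)M^{-1}(s,s)S(s,s)]^\top\varphi(s)-[Q(s,s)-S^\top(s,s)M^{-1}(s,s)S(s,s)]\bar X(s)$. For the $\partial_t$ summand, I differentiate the ODE and terminal datum of $\psi(\cdot;t)$ in the parameter $t$ and apply the variation-of-constants formula through the closed-loop transition matrix $\Phi(\cdot,\cdot)$ of (\ref{Phi1}); this produces a $\dot G(s)\bar X(T)$ contribution together with an integral from $s$ to $T$ of the partial derivatives $Q_s(s,\tau),S_s(s,\tau),M_s(s,\tau)$. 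Substituting $\tilde u(\tau)\bar X(\tau)=-M^{-1}(\tau,\tau)[S(\tau,\tau)\bar X(\tau)+B^\top(\tau)\varphi(\tau)]$ and $\Phi(\tau,s)\bar X(s)=\bar X(\tau)$ reproduces precisely the correction $Q(s,s)-\hat Q(s,s)$ prescribed by (\ref{1.7}), yielding the second equation of (\ref{1.6}). The terminal condition $\varphi(T)=G(T)\bar X(T)$ is immediate, and continuity of $(\bar X,\varphi)$ on $[t,T]$ follows from (H1)--(H4).

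The main obstacle will be the diagonal bookkeeping in the last step: carefully tracking how the $t$-dependence of the weights $Q(t,\cdot),S(t,\cdot),M(t,\cdot)$ and of the terminal datum $G(t)$ in $\psi(s;t)$ translates, via the chain rule along $t=s$, into the specific integro-differential form of $\hat Q(s,s)$ prescribed by (\ref{1.7}). The $C^1$ regularity from (H2)--(H4) and the smooth dependence of $\Phi$ on its base point are precisely what makes these computations rigorous.
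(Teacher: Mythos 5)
Your proposal is correct and follows essentially the same route as the paper: a spike variation yielding a convex quadratic in the perturbation whose linear part must vanish (giving the feedback representation $\tilde u(t)=-M^{-1}(t,t)(B^{\top}(t)\tilde P(t)+S(t,t))$), followed by defining $\varphi$ as the diagonal of a two-parameter adjoint and verifying (\ref{1.6}) via the chain rule $\dot\varphi(s)=\partial_s\psi(s;s)+\partial_t\psi(s;s)$ — your vector adjoint $\psi(s;t)$ is exactly the paper's matrix $\tilde P$ applied to $\bar X$, and your duality/integration-by-parts step reproduces the paper's explicit computation of the Gateaux derivative. The only (shared, minor) caveat is that the initial datum $Y(t)=B(t)[v-\tilde u(t)x]$ and hence the pointwise first-order condition hold only at Lebesgue points of $B$ and $\tilde u$, which does not affect the construction of the continuous solution to (\ref{1.6}).
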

\begin{proof}
We prove the proposition with two steps. In Step $1$, we derive a representation of the equilibrium, given the existence of the equilibrium. In Step $2$, using the representation obtained in step $1$, we construct a solution to the two-point equilibrium boundary value problem.\par
\textit{Step $1$}. Let $\bar{u}$ denote the linear equilibrium control defined by Definition \ref{DefEquilibrium}, then there exists a function $\tilde{u}\in L^{2}\left((0,T);\mathbb{R}^{n\times m}\right)$ such that
	\begin{equation}\label{LinearControl} 
	\bar{u}(t,x)=\tilde{u}(t)x \mbox{ for any } (t,x)\in[0,T)\times \mathbb{R}^{n}.
	\end{equation}
Consider \begin{equation*} \label{2.p3}
	Y^{\varepsilon,t,v}(s)=\frac{X^{u^{\varepsilon,v}}_{t,x}(s)-X^{\bar{u}}_{t,x}(s)}{\varepsilon},~~s\in [t,T].
	\end{equation*}
Then it follows from the linear structure of the equilibrium control (\ref{LinearControl}), the definition of the perturbation control (\ref{1.4}) and the LODE (\ref{1.1}) that	
\[\left\{\begin{array}{ll}
\dot{Y}^{\varepsilon,t,v}(s)=\left\{\begin{array}{ll}
A(s)	Y^{\varepsilon,t,v}(s)+\frac{1}{\varepsilon}
\left[B(s)v-B(s)\tilde{u}(s)\bar{X}_{t,x}(s)\right],&s\in[t,t+\varepsilon],\\
(A(s)+B(s)\tilde{u}(s))	Y^{\varepsilon,t,v}(s),&s\in (t+\varepsilon,T],
\end{array}\right.\\
Y^{\varepsilon,t,v}(t)=0.
\end{array}
\right.
\]		
The above ODE problem admits a unique solution $Y^{\varepsilon,t,v}\in C\left([0,T];\mathbb{R}^n\right)$ given by
\begin{eqnarray} \label{2.5}
Y^{\varepsilon,t,v}(s)
	&=&\left\{\begin{array}{ll}
\frac{1}{\varepsilon }\int_{t}^{s}\Phi_{A}(s,\nu)B(\nu)[v-\tilde{u}(\nu)\tilde{\Phi}(\nu,t)x]d\nu,&s\in[t,t+\varepsilon],\\
	\frac{1}{\varepsilon }\tilde{\Phi}(s,t+\varepsilon)\int_{t}^{t+\varepsilon}\Phi_{A}(t+\varepsilon,\nu)B(\nu)[v-\tilde{u}(\nu)\tilde{\Phi}(\nu,t)x]d\nu,&s\in (t+\varepsilon,T],
	\end{array}\right.
	\end{eqnarray}
where	
\begin{equation} \label{2.4}
	\Phi_{A}(s,t)=\exp\left(\int_{t}^{s}A(\nu)d\nu\right),\tilde{\Phi} (s,t)=\exp\left(\int_{t}^{s}(A(\nu)+B(\nu)\tilde{u}(\nu))d\nu\right) ,\forall t,s\in [0,T].
	\end{equation}
As $X^{u^{\varepsilon,v}}_{t,x}(s)-X^{\bar{u}}_{t,x}(s) = \varepsilon Y^{\varepsilon,t,v}(s)$, then (\ref{2.5}) yields that
	\begin{equation*} \label{2.6}
\lim\limits_{\varepsilon\searrow0}\left\|X^{u^{\varepsilon,v}}_{t,x}-X^{\bar{u}}_{t,x}\right\|_{C([t,T];\mathbb{R}^n)}=0.
\end{equation*}
For ease of exposition, we introduce the following functions.
	\begin{equation} \label{2.3}
\tilde{Q}(t,s)=
Q(t,s)+\tilde{u}^{\top}(s)S(t,s)+S^{\top}(t,s)\tilde{u}(s)+\tilde{u}^{\top}(s)M(t,s)\tilde{u}(s),~~s\in [t,T].
\end{equation}
and
\begin{eqnarray} \label{2.10}
\tilde{P} (\tau)=\tilde{\Phi}^{\top}(T,\tau)G(t)\tilde{\Phi} (T,t)+ \int_{\tau}^{T} \tilde{\Phi}^{\top}(s,\tau)\tilde{Q}(t,s)\tilde{\Phi} (s,t)ds,\tau\in [t,T].
\end{eqnarray}
We are now calculating $\lim\limits_{\varepsilon\searrow 0} \frac{J\left(t,x;u^{\varepsilon,v}\right)-J(t,x;\bar{u})}{\varepsilon}$.
\begin{align*}
\lim_{\varepsilon\searrow 0} \frac{J\left(t,x;u^{\varepsilon,v}\right)-J(t,x;\bar{u})}{\varepsilon}
	&=\left\langle Q(t,t)x,x\right\rangle+2\left\langle
	S(t,t)x,v\right\rangle+\left\langle M(t,s)v,v\right\rangle\\\nonumber &-\lim_{\varepsilon\searrow 0}\frac{1}{\varepsilon}\int_{t}^{t+\varepsilon}\left\langle \tilde{Q}(t,s)X^{\bar{u}}_{t,x}(s),X^{\bar{u}}_{t,x}(s)\right\rangle ds\nonumber\\
	&+\lim_{\varepsilon\searrow 0}\int_{t+\varepsilon}^{T}\left\langle \tilde{Q}(t,s)\left(X^{\bar{u}}_{t,x}(s)+X^{u^{\varepsilon,v}}_{t,x}(s)\right),Y^{\varepsilon,t,v}(s)\right\rangle ds\\\nonumber &+\lim_{\varepsilon\searrow 0}\left\langle G(t)\left(X^{\bar{u}}_{t,x}(T)+X^{u^{\varepsilon,v}}_{t,x}(T)\right),Y^{\varepsilon,t,v}(T)\right\rangle.\nonumber
\end{align*}
Plug the representations of $X^{\bar{u}}_{t,x}(s)$, $X^{u^{\varepsilon,v}}_{t,x}(s)$ into the above equation, we then have
\begin{align*}
\lim_{\varepsilon\searrow 0} \frac{J\left(t,x;u^{\varepsilon,v}\right)-J(t,x;\bar{u})}{\varepsilon}&=\left\langle Q(t,t)x,x\right\rangle+2\left\langle
	S(t,t)x,v\right\rangle+\left\langle M(t,s)v,v\right\rangle\\\nonumber &-\lim_{\varepsilon\searrow 0}\frac{1}{\varepsilon}\int_{t}^{t+\varepsilon}\left\langle \tilde{\Phi}^{\top} (s,t)\tilde{Q}(t,s)\tilde{\Phi} (s,t)x,x\right\rangle ds\nonumber\\
	&+2\lim_{\varepsilon\searrow 0}\int_{t+\varepsilon}^{T}\left\langle \tilde{Q}(t,s)\tilde{\Phi} (s,t)x,Y^{\varepsilon,t,v}(s)\right\rangle ds+2\lim_{\varepsilon\searrow 0}\left\langle G(t)\tilde{\Phi} (T,t)x,Y^{\varepsilon,t,v}(T)\right\rangle.\nonumber\\
\end{align*}
Plug the representation of $Y^{\varepsilon,t,v}$ into the above equation, we then have
	\begin{eqnarray} \label{2.7}
	&&\lim_{\varepsilon\searrow 0} \frac{J\left(t,x;u^{\varepsilon,v}\right)-J(t,x;\bar{u})}{\varepsilon}\nonumber\\
&=&\left\langle Q(t,t)x,x\right\rangle+2\left\langle
S(t,t)x,v\right\rangle+\left\langle M(t,s)v,v\right\rangle-\lim_{\varepsilon\searrow 0}\frac{1}{\varepsilon}\int_{t}^{t+\varepsilon}\left\langle \tilde{\Phi}^{\top} (s,t)\tilde{Q}(t,s)\tilde{\Phi} (s,t)x,x\right\rangle ds\nonumber\\
&&+2\left\langle \lim_{\varepsilon\searrow 0}\frac{1}{\varepsilon}\int_{t}^{t+\varepsilon} B^{\top}(\nu)\Phi^{\top}_{A}(t+\varepsilon,\nu)\tilde{P} (t+\varepsilon)xd\nu,v\right\rangle \\
&&-2\left\langle \lim_{\varepsilon\searrow 0}\frac{1}{\varepsilon}\int_{t}^{t+\varepsilon}\tilde{\Phi}^{\top}(\nu,t)\tilde{u}^{\top}(\nu)]B^{\top}(\nu)\Phi^{\top}_{A}(t+\varepsilon,\nu)\tilde{P} (t+\varepsilon)xd\nu,x\right\rangle.\nonumber
	\end{eqnarray}
	Define \begin{align*}
	\tilde{J}(t,x,v)\equiv\lim_{\varepsilon\searrow 0} \frac{J\left(t,x;u^{\varepsilon,v}\right)-J(t,x;\bar{u})}{\varepsilon}.
	\end{align*}
For any fixed $ (t,x)\in [0,T)\times\mathbb{R}^{n}$, it follows from (\ref{2.7}) that $\tilde{J}(t,x,v)$ is strictly convex in $v$.\par
Moreover, the definition of an equilibrium yields that $\tilde{J}(t,x,v)\ge 0$, which, together with (\ref{2.7}), implies that $\tilde{J}(t,x,v)$ has a unique minimum point  $\tilde{v}$ given by
	\begin{eqnarray*} \label{2.9}
	\tilde{v}=-M^{-1}(t,t)\left[S(t,t)+\lim_{\varepsilon\searrow 0}\frac{1}{\varepsilon}\int_{t}^{t+\varepsilon} B^{\top}(\nu)\Phi^{\top}_{A}(t+\varepsilon,\nu)\tilde{P} (t+\varepsilon)d\nu\right]x.
	\end{eqnarray*}
	It follows from (\ref{2.4}) and (\ref{2.10}) that
	\begin{eqnarray*}\label{2.wp1}
	\lim_{\varepsilon\searrow 0}\frac{1}{\varepsilon}\int_{t}^{t+\varepsilon} B^{\top}(\nu)\Phi^{\top}_{A}(t+\varepsilon,\nu)\tilde{P} (t+\varepsilon)d\nu=B^{\top}(t)\tilde{P}(t), \mbox{ a.e. }t\in[0,T],
\end{eqnarray*}
and thus
\begin{align*}
\tilde{v} = -M^{-1}(t,t)\left(B^{\top}(t)\tilde{P}(t)+S(t,t)\right)x.
\end{align*}
Then the uniqueness of the minimum point $\tilde{v}$ yields that
	\begin{eqnarray} \label{2.11}
	\bar{u}(t,x)=\tilde{v}=-M^{-1}(t,t)\left(B^{\top}(t)\tilde{P}(t)+S(t,t)\right)x\mbox{ for any }(t,x)\in [0,T]\times\mathbb{R}^{n}.
	\end{eqnarray}
\textit{Step $2$.} Define
\begin{eqnarray} \label{2.14}
	\varphi(s)=\tilde{P}(s)X^{\bar{u}}_{t,x}(s)\mbox{ for any }s\in [t,T].
	\end{eqnarray}
	We will verify that $(X^{\bar{u}}_{t,x},\varphi)$ is a solution to the  two-point equilibrium boundary value problem (\ref{1.6}).\par
Following Assumptions (H1)-(H4),  and the representation of $\tilde{P}$ (\ref{2.10}), it is to see that $\tilde{P}$ is absolutely continuous.
Take the first order derivative on $\tilde{P}$, then we have
	\begin{eqnarray} \label{2.13}
	\dot{\tilde{P}}(t)&=&-(A(t)+B(t)\tilde{u}(t))^{\top}\tilde{P}(t)-\tilde{P}(t)(A(t)+B(t)\tilde{u}(t))-\tilde{Q}(t,t)\\
	&&+\tilde{\Phi}^{\top}(T,t)\dot{G}(t)\tilde{\Phi}(T,t)+\int_{t}^{T}\tilde{\Phi}^{\top}(s,t)\tilde{Q}_t(t,s)\tilde{\Phi}(s,t) ds, \forall t\in[0,T].\nonumber
	\end{eqnarray}
	Plug  (\ref{2.11}) into LODE (\ref{1.1}), we then have
	\begin{eqnarray} \label{2.15}
	\dot{X}^{\bar{u}}_{t,x}(s)
	=\left[A(s)-B(s)M^{-1}(s,s)S(s,s)\right]X^{\bar{u}}_{t,x}(s)-B(s)M^{-1}(s,s)B^{\top}(s)\varphi(s).
	\end{eqnarray}
	Take the first order derivative on $\varphi$, then (\ref{2.11}), (\ref{2.14}),  (\ref{2.13}) and (\ref{2.15}) yield that
	\begin{eqnarray*} \label{2.16}
	\dot{\varphi}(s)
	&=&-\left[A(s)-B(s)M^{-1}(s,s)\left(B^{\top}(s)\tilde{P}(s)+S(s,s)\right)\right]^{\top}\varphi(s)\nonumber\\
	&&+\tilde{P}(s)\left[A(s)-B(s)M^{-1}(s,s)\left(B^{\top}(s)\tilde{P}(s)+S(s,s)\right)\right]X^{\bar{u}}_{t,x}(s)-\tilde{Q}(s,s)X^{\bar{u}}_{t,x}(s)\nonumber\\
	&&+\tilde{\Phi}^{\top}(T,s)\dot{G}(s)\tilde{\Phi}(T,s)X^{\bar{u}}_{t,x}(s)+\int_{s}^{T}\tilde{\Phi}^{\top}(\tau,s)\tilde{Q}_{s}(s,\tau)\tilde{\Phi}(\tau,s) X^{\bar{u}}_{t,x}(s)d\tau\nonumber\\
	&&-\tilde{P}(s)\left[A(s)-B(s)M^{-1}(s,s)S(s,s)\right]X^{\bar{u}}_{t,x}(s)+\tilde{P}(s)B(s)M^{-1}(s,s)B^{\top}(s)\varphi(s)\\
	&=&-\left[A(s)-B(s)M^{-1}(s,s)S(s,s)\right]^{\top}\varphi(s)+\left[B(s)M^{-1}(s,s)B^{\top}(s)\tilde{P}(s)\right]^{\top}\varphi(s)\nonumber\\
	&&-\tilde{Q}(s,s)X^{\bar{u}}_{t,x}(s)+\tilde{\Phi}^{\top}(T,s)\dot{G}(s)\tilde{\Phi}(T,s)X^{\bar{u}}_{t,x}(s)+\int_{s}^{T}\tilde{\Phi}^{\top}(\tau,s)\tilde{Q}_{s}(s,\tau)\tilde{\Phi}(\tau,s) X^{\bar{u}}_{t,x}(s)d\tau\nonumber\\
	&=&-\left[A(s)-B(s)M^{-1}(s,s)S(s,s)\right]^{\top}\varphi(s)-\left(\hat{Q}(s,s)-S^{\top}(s,s)M^{-1}(s,s)S(s,s)\right)X^{\bar{u}}_{t,x}(s),\nonumber
	\end{eqnarray*}
	where
	\begin{eqnarray} \label{2.17}
	\hat{Q}(s,s)&=&\tilde{Q}(s,s)-\tilde{P}^{\top}(s)B(s)M^{-1}(s,s)B^{\top}(s)\tilde{P}(s)+S^{\top}(s,s)M^{-1}(s,s)S(s,s)\nonumber\\
	&&-\tilde{\Phi}^{\top}(T,s)\dot{G}(s)\tilde{\Phi}(T,s)-\int_{s}^{T}\tilde{\Phi}^{\top}(\tau,s)\tilde{Q}_{s}(s,\tau)\tilde{\Phi}(\tau,s) d\tau.
	\end{eqnarray}	
 (\ref{2.3}) and (\ref{2.17}) lead to
	\begin{eqnarray} \label{2.18}
	\hat{Q}(s,s)&=&Q(s,s)-\tilde{\Phi}^{\top}(T,s)\dot{G}(s)\tilde{\Phi}(T,s)-\int_{s}^{T}\tilde{\Phi}^{\top}(\tau,s)\tilde{Q}_{s}(s,\tau)\tilde{\Phi}(\tau,s) d\tau,\nonumber\\
	\tilde{Q}_{s}(s,\tau)&=&Q_{s}(s,\tau)-S^{\top}_{s}(s,\tau)M^{-1}(\tau,\tau)\left(B^{\top}(\tau)\tilde{P}(\tau)+S(\tau,\tau)\right)\\
	&&-\left(B^{\top}(\tau)\tilde{P}(\tau)+S(\tau,\tau)\right)^{\top}M^{-1}(\tau,\tau)S_{s}(s,\tau)\nonumber\\
	&&+\left(B^{\top}(\tau)\tilde{P}(\tau)+S(\tau,\tau)\right)^{\top}M^{-1}(\tau,\tau)M_{s}(s,\tau)M^{-1}(\tau,\tau)\left(B^{\top}(\tau)\tilde{P}(\tau)+S(\tau,\tau)\right).\nonumber
	\end{eqnarray}
Moreover, it follows from (\ref{2.14}) and (\ref{2.18}) that
	\begin{eqnarray*}
		&&\left\langle\hat{Q}(s,s)X^{\bar{u}}_{t,x}(s),X^{\bar{u}}_{t,x}(s)\right\rangle\\
		&=&\left\langle Q(s,s)X^{\bar{u}}_{t,x}(s),X^{\bar{u}}_{t,x}(s)\right\rangle-\left\langle \dot{G}(s)X^{\bar{u}}_{t,x}(T),X^{\bar{u}}_{t,x}(T)\right\rangle-\int_{s}^{T}\left\langle \tilde{Q}_{s}(s,\tau)X^{\bar{u}}_{t,x}(\tau),X^{\bar{u}}_{t,x}(\tau)\right\rangle d\tau\nonumber\\
		&=&\left\langle Q(s,s)X^{\bar{u}}_{t,x}(s),X^{\bar{u}}_{t,x}(s)\right\rangle-\left\langle \dot{G}(s)X^{\bar{u}}_{t,x}(T),X^{\bar{u}}_{t,x}(T)\right\rangle\\
		&&-\int_{s}^{T}\left\langle Q_{s}(s,\tau)X^{\bar{u}}_{t,x}(\tau),X^{\bar{u}}_{t,x}(\tau)\right\rangle d\tau\\
		&&-\int_{s}^{T}\bigg\langle M_{s}(s,\tau)M^{-1}(\tau,\tau)\left(B^{\top}(\tau)\varphi(\tau)+S(\tau,\tau)X^{\bar{u}}_{t,x}(\tau)\right)-2S_{s}(s,\tau)X^{\bar{u}}_{t,x}(\tau),\\
		&&M^{-1}(\tau,\tau)\left(B^{\top}(\tau)\varphi(\tau)+S(\tau,\tau)X^{\bar{u}}_{t,x}(\tau)\right)\bigg\rangle d\tau
	\end{eqnarray*}
	Finally, combining (\ref{2.14}), (\ref{2.15}) and (\ref{2.16}) together, we have that $(X^{\bar{u}}_{t,x},\varphi)\in C\left([t,T];\mathbb{R}^{n}\right)\times C\left([t,T];\mathbb{R}^{n}\right)$ solves the two-point boundary value problem (\ref{1.6}) for any $t\in[0,T)$. This completes the proof.
\end{proof}
\begin{proposition}\label{peng-proposition2.2}
	 Suppose that the  two-point equilibrium boundary value problem (\ref{1.6}) has a solution $(\varphi(\cdot), \bar{X}(\cdot))\in C\left([t,T];\mathbb{R}^n\right)\times C\left([t,T];\mathbb{R}^n\right)$ for any $(t,x)\in [0,T)\times \mathbb{R}^n$, then the equilibrium Riccati equation (\ref{1.8}) has a symmetric solution $P(\cdot)\in C\left([0,T];\mathbb{R}^{n\times n}\right)$.
\end{proposition}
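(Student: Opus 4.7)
The plan is to construct $P(\cdot)$ from the assumed BVP solutions by exploiting the linear-in-initial-condition structure of (\ref{1.6}). Fix any $t\in[0,T)$. By hypothesis, for each standard basis vector $e_i\in\mathbb R^n$ there is a solution $(\bar X^i,\varphi^i)$ of (\ref{1.6}) with initial state $e_i$, and I would define $P(t)$ to be the $n\times n$ matrix whose $i$-th column is $\varphi^i(t)$. To justify this, I first need to see that (\ref{1.6}) is linear in $(\bar X,\varphi)$ in the sense that superposition produces a solution. The subtle point is the non-local coefficient $\hat Q(s,s)$ in (\ref{1.7}), which depends on the solution itself; however, once the closed-loop ansatz $\varphi(s)=P(s)\bar X(s)$ is imposed, (\ref{1.7}) becomes a quadratic form in $\bar X(s)$ alone, so $\hat Q(s,s)$ reduces to a bona fide (symmetric) matrix and the BVP becomes genuinely linear with $P$-dependent coefficients. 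Superposition then yields $\varphi^{t,x}(t)=P(t)x$ for every $x$, and restarting the BVP at time $s$ gives the pointwise relation $\varphi^{t,x}(s)=P(s)\bar X^{t,x}(s)$ on $[t,T]$.

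Substituting $\varphi=P\bar X$ into the first line of (\ref{1.6}) yields $\dot{\bar X}(s)=(A(s)-B(s)\Upsilon(s))\bar X(s)$ with $\Upsilon(s)=M^{-1}(s,s)(B^{\top}(s)P(s)+S(s,s))$, so $\bar X(s)=\Phi(s,t)x$ with $\Phi$ as in (\ref{Phi1}). Differentiating $\varphi(s)=P(s)\bar X(s)$ in $s$ and equating with the second line of (\ref{1.6}) produces, after letting $x$ vary and using invertibility of $\Phi(s,t)$, the matrix identity
\[
\dot P(s)+P(s)(A(s)-B(s)\Upsilon(s))+(A(s)-B(s)M^{-1}(s,s)S(s,s))^{\top}P(s)+\hat Q(s,s)-S^{\top}(s,s)M^{-1}(s,s)S(s,s)=0.
\]
Expanding $\Upsilon$ and collecting terms, this matches the Riccati equation (\ref{1.8}) precisely when $\hat Q(s,s)=\bar Q(s,s)$. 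I would verify this last identification by evaluating (\ref{1.7}) along the closed-loop trajectories $\bar X(\tau)=\Phi(\tau,s)\bar X(s)$ and $\varphi(\tau)=P(\tau)\Phi(\tau,s)\bar X(s)$ and comparing with (\ref{1.9}); the calculation is the reverse of the manipulations (\ref{2.13})--(\ref{2.18}) carried out in the proof of Proposition~\ref{peng-proposition2.1}.

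Continuity of $P(\cdot)$ on $[0,T]$ would follow from its representation via $\Phi$ together with the continuous dependence of the BVP solutions on the initial data. For symmetry, I would employ a Green/Lagrange-type identity: for two initial states $x,y$ with associated solutions $(\bar X^{x},\varphi^{x})$ and $(\bar X^{y},\varphi^{y})$, differentiate $\langle\varphi^{x}(s),\bar X^{y}(s)\rangle-\langle\varphi^{y}(s),\bar X^{x}(s)\rangle$ and use (\ref{1.6}) together with the symmetry of $Q$, $M$, $G$ and of $\hat Q$ (ensured by (\ref{1.7}) being a quadratic form) to show this derivative vanishes; integrating from $t$ to $T$ and invoking the terminal condition $\varphi(T)=G(T)\bar X(T)$ gives $\langle P(t)x,y\rangle=\langle x,P(t)y\rangle$. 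The main obstacle I foresee is precisely the algebraic identification $\hat Q=\bar Q$: no deep analysis is required, but one must carry out the integral bookkeeping in (\ref{1.7})--(\ref{1.9}) carefully, swapping the roles of $s$ and $\tau$ in the inner integrals, in order to match the non-local kernels on the two sides.
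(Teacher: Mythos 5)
Your middle section --- substituting $\varphi=P\bar X$ into (\ref{1.6}), reading off $\bar X(s)=\Phi(s,t)x$, differentiating $\varphi(s)=P(s)\bar X(s)$ to obtain the Riccati equation, and checking $\hat Q=\bar Q$ by evaluating (\ref{1.7}) along the closed-loop trajectory --- is exactly the paper's argument. The differences, and the gaps, lie in how you set up $P$ and how you prove symmetry.

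First, the column-wise construction from the $n$ basis solutions rests on superposition, but (\ref{1.6})--(\ref{1.7}) is not a linear system: $\hat Q(s,s)$ is defined only through the scalar identity (\ref{1.7}) evaluated along the particular trajectory, so it depends on the solution and is not even determined as a matrix off that trajectory. Your fix --- once the closed-loop ansatz is imposed, $\hat Q$ becomes a bona fide matrix --- is circular, because the ansatz requires the $P$ you are in the middle of constructing. Second, the restart relation $\varphi^{t,x}(s)=P(s)\bar X^{t,x}(s)$, with $P(s)$ built from the BVP solutions issued at time $s$, requires that the tail of the solution issued at $t$ coincide with the solution issued at $(s,\bar X^{t,x}(s))$; that is uniqueness of BVP solutions, which the proposition does not assume. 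The paper avoids both problems by taking a single solution issued at $(0,x)$ with $x\neq 0$, showing $\bar X(s)=\Phi(s,0)x\neq 0$ on $[0,T]$, and defining $P(s)$ pointwise along that one trajectory via (\ref{2.22}); no superposition or restart is needed. Third, your Green--Lagrange identity for symmetry needs the bilinear values $\left\langle\hat Q(s,s)\bar X^{x}(s),\bar X^{y}(s)\right\rangle$ for two different trajectories, whereas (\ref{1.7}) prescribes only the quadratic (diagonal) values along one trajectory; again you would need to know in advance that one fixed symmetric matrix $\hat Q(s,s)$ serves all trajectories. The paper instead proves symmetry a posteriori: once $P$ satisfies the Riccati equation with the manifestly symmetric coefficient $\bar Q(s,s)$ of (\ref{1.9}), the difference $P^{\top}-P$ solves a homogeneous Lyapunov equation with zero terminal datum and hence vanishes. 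I recommend you adopt the single-trajectory definition of $P$ and the Lyapunov argument for symmetry; your Riccati derivation and the identification $\hat Q=\bar Q$ can then be kept as they are.
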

\begin{proof}
For any fixed $x\in \mathbb{R}^n$ with $x \neq 0$, denote the solution to  the two point boundary problem (\ref{1.6}) (with initial state $(0,x)$) by $(\varphi,\bar{X})\in C([0,T];\mathbb{R}^n)\times C([0,T];\mathbb{R}^n)$.  We define
	\begin{eqnarray}\label{2.22}
\varphi(t)=P(t)\bar{X}(t)\mbox{ for all }t\in [0,T].
\end{eqnarray}
In order to make sure the existence of such $P(t)$ in (\ref{2.22}), we need to show $\bar{X}(t)\neq0, \forall t\in[0,T]$. In fact, plug  (\ref{2.22}) into  (\ref{1.6}), we then have that
$\bar{X}$ satisfies the following differential equation
\[
\left\{
\begin{array}{ll}
\dot{\bar{X}}(t)=\left[A(t)-B(t)M^{-1}(t,t)\left(B^{\top}(t)P(t)+S(t,t)\right)\right]\bar{X}(t),& t\in(0,T], \\
\bar{ X}(0)=x.
\end{array}
\right.
\]
Solving the above ordinary differential equation, we then have
\[\bar{X}(t)=\Phi(t,0)x\mbox{ for all } t\in [0,T],\]
where $\Phi(t,0)$ is given by (\ref{Phi1}).\par
Hence,  $\bar{X}(t)\neq0, \forall t\in[0,T]$ follows from $x\neq 0.$\par
Next, we derive the differential equation that $P(t)$ satisfies. As $\varphi$ and $\bar{X}$ are continuous and weakly differentiable, we have that $P\in W^{1,1}\left((0,T);\mathbb{R}^{n\times n}\right)$. Furthermore, taking the first order derivative on the both sides of  (\ref{2.22}), we then have
	\[\dot{\varphi}(s)=\dot{P}(s)\bar{X}(s)+P(s)\dot{\bar{X}}(s)\mbox{ for all }s\in[0,T].\]
	Moreover, it is from (\ref{1.6}), (\ref{1.7}) and (\ref{2.22})  to see
	\begin{eqnarray}\label{2.23}
	\left\{\begin{array}{ll}\dot{P
	}(s)+P(s)A(s)+A^{\top}(s)P(s)+\hat{Q}(s,s)\\
-\left[P(s)B(s)+S^{\top}(s,s)\right]M^{-1}(s,s)\left[B^{\top}(s)P(s)+S(s,s)\right]=0,&s\in [0,T),\\
P(T)=G(T).
\end{array}\right.
	\end{eqnarray}
Next, we verify that $\bar{Q}(t,t) = \hat{Q}(t,t)$.  It follows from  (\ref{1.7}) that
\begin{eqnarray*}
&&\left\langle\hat{Q}(s,s)\Phi(s,0)x,\Phi(s,0)x\right\rangle\\
&=&\left\langle Q(s,s)\Phi(s,0)x,\Phi(s,0)x\right\rangle-\left\langle \dot{G}(s)\Phi(T,0)x,\Phi(T,0)x\right\rangle\\
&&-\int_{s}^{T}\left\langle Q_s(s,\tau)\Phi(\tau,0)x,\Phi(\tau,0)x\right\rangle d\tau\nonumber\\
&&-\int_{s}^{T}\bigg\langle  M_s(s,\tau)M^{-1}(\tau,\tau)\left(B^{\top}(\tau)P(\tau)+S(\tau,\tau)\right)\Phi(\tau,0)x\\
&&-2S_s(s,\tau)\Phi(\tau,0)x,
M^{-1}(\tau,\tau)\left(B^{\top}(\tau)P(\tau)+S(\tau,\tau)\right)\Phi(\tau,0)x\bigg\rangle d\tau\nonumber\\
&=&\left\langle Q(s,s)\Phi(s,0)x,\Phi(s,0)x\right\rangle-\left\langle \Phi^{\top}(T,s)\dot{G}(s)\Phi(T,s)\Phi(s,0)x,\Phi(s,0)x\right\rangle\\
&&-\int_{s}^{T}\left\langle\Phi^{\top}(\tau,s) Q_s(s,\tau)\Phi(\tau,s)\Phi(s,0)x,\Phi(s,0)x\right\rangle d\tau\nonumber\\
&&-\int_{s}^{T}\bigg\langle \Phi^{\top}(\tau,s)\Upsilon^{\top}(\tau) \bigg[ M_s(s,\tau)\Upsilon(\tau)\Phi(\tau,s)-2S_s(s,\tau)\Phi(\tau,s)\bigg]\Phi(s,0)x,
\Phi(s,0)x\bigg\rangle d\tau.\nonumber
\end{eqnarray*}
Then the representation $\bar{Q}(t,t)$ (see (\ref{1.9})) yields that
\begin{align*}
\left\langle\hat{Q}(s,s)\Phi(s,0)x,\Phi(s,0)x\right\rangle = \left\langle\bar{Q}(s,s)\Phi(s,0)x,\Phi(s,0)x\right\rangle ,
\end{align*}
which suggests that  $\bar{Q}(t,t) = \hat{Q}(t,t)$.\par
Therefore, comparing (\ref{2.23}) and (\ref{1.8}), we have $P$ solves the Riccati equation (\ref{1.8}).\par
It now suffices to prove the symmetry of $P$. It is easy to see from (\ref{1.9}) that  $\bar{Q}^{\top} (t,t)=\bar{Q}(t,t)$ for all $t\in [0,T]$.  Moreover, the Riccati equation  (\ref{1.8}) yields that
	\begin{eqnarray}\label{p2.23}
	\left\{\begin{array}{ll}
	\dot{P}^{\top}(s)+P^{\top}(s)A(s)+A^{\top}(s)P^{\top}(s)+\bar{Q}(s,s)\\
	-\left[B^{\top}(s)P(s)+S(s,s)\right]^{\top}M^{-1}(s,s)\left[B^{\top}(s)P^{\top}(s)+S(s,s)\right]=0,&s\in [0,T),\\
	P^{\top}(T)=G(T).
	\end{array}\right.
	\end{eqnarray}
	Define
\begin{eqnarray}\label{0}
\left\{\begin{array}{ll}
\mathbb{P}(s)=P^{\top}(s)-P(s),&s\in [0,T],\\
\mathbb{A}(s)=A(s)-B(s)M^{-1}(s,s)\left[B^{\top}(s)P^{\top}(s)+S(s,s)\right],&s\in [0,T].
\end{array}\right.
\end{eqnarray}
One can see from (\ref{2.23}) and (\ref{p2.23}) that
	\begin{eqnarray}\label{p0.23}
	&&\frac{d}{ds}\left(P^{\top}(s)-P(s)\right)+\left(P^{\top}(s)-P(s)\right)A(s)+A^{\top}(s)\left(P^{\top}(s)-P(s)\right)\nonumber\\
	&=&\left(P^{\top}(s)-P(s)\right)B(s)M^{-1}(s,s)\left[B^{\top}(s)P^{\top}(s)+S(s,s)\right]\\
	&&+\left[B^{\top}(s)P^{\top}(s)+S(s,s)\right]^{\top}M^{-1}(s,s)B^{\top}(s)\left(P^{\top}(s)-P(s)\right).\nonumber
	\end{eqnarray}
Furthermore, it follows from (\ref{2.23}), (\ref{p2.23}), (\ref{0}) and (\ref{p0.23}) that
\begin{eqnarray*}
\left\{\begin{array}{ll}
\dot{\mathbb{P}}(s)+\mathbb{P}(s)\mathbb{A}(s)+\mathbb{A}^{\top}(s)\mathbb{P}(s)=0,&s\in [0,T),\\
\mathbb{P}(T)=0.
\end{array}\right.
\end{eqnarray*}
It is easy to see that (\ref{p0.23}) admits a unique solution $\mathbb{P}(s)=0$ for all $s\in [0,T]$, which is equivalent to $P^{\top}(s)=P(s)$ for all $s\in [0,T]$.\par  This completes the proof.
\end{proof}

\begin{proposition}\label{peng-proposition2.3} If the equilibrium Riccati equation (\ref{1.8}) has a  symmetric solution $P(\cdot)\in C\left([0,T];\mathbb{R}^{n\times n}\right)$, then the problem  has an equilibrium control $\bar{u}$ given by
	\begin{eqnarray}\label{2.24}
	\bar{u}(t,x)=-M^{-1}(t,t)\left(B^{\top}P(t)+S(t,t)\right)x,~~\forall(t,x)\in[0,T]\times\mathbb{R}^{n}.
	\end{eqnarray}
\end{proposition}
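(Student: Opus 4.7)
The strategy is to reverse the direction of Proposition~\ref{peng-proposition2.1}. I would set $\tilde u(t)=-M^{-1}(t,t)[B^\top(t)P(t)+S(t,t)]$, $\bar u(t,x)=\tilde u(t)x$, and verify the equilibrium inequality (\ref{1.3}) by computing the limit
\begin{align*}
\tilde J(t,x,v)\equiv\lim_{\varepsilon\searrow 0}\frac{J(t,x;u^{\varepsilon,v})-J(t,x;\bar u)}{\varepsilon}
\end{align*}
and showing $\tilde J(t,x,v)\ge 0$ for every triple $(t,x,v)$. The first move is to invoke the spike-variation calculation from Step~1 of the proof of Proposition~\ref{peng-proposition2.1}; that derivation only uses the linear feedback structure of $\bar u$, not its equilibrium property, so formula (\ref{2.7}) applies verbatim. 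Since $M(t,t)$ is positive definite by (H2), $\tilde J(t,x,\cdot)$ is a strictly convex quadratic in $v$ with unique minimizer $v^{\star}=-M^{-1}(t,t)[B^\top(t)\tilde P(t)+S(t,t)]x$, where $\tilde P$ is the auxiliary matrix of (\ref{2.10}) built from the present $\tilde u$.

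The core step is to prove $\tilde P(t)=P(t)$ for every $t\in[0,T]$, so that $v^{\star}$ coincides with $\bar u(t,x)$. Differentiating (\ref{2.10}) gives the ODE (\ref{2.13}) for $\tilde P$. Setting $\tilde A=A+B\tilde u$, the Riccati equation (\ref{1.8}) can be rewritten as
\begin{align*}
\dot P+\tilde A^\top P+P\tilde A+PBM^{-1}B^\top P+\bar Q(t,t)-S^\top(t,t)M^{-1}(t,t)S(t,t)=0.
\end{align*}
For $R=\tilde P-P$, the $R$-dependent terms combine into $\dot R+\tilde A^\top R+R\tilde A$; a short computation using (\ref{2.3}) yields $\tilde Q(t,t)-PBM^{-1}B^\top P+S^\top M^{-1}S=Q(t,t)$, and the cancellation of the remaining source terms then reduces to
\begin{align*}
\bar Q(t,t)=Q(t,t)-\tilde\Phi^\top(T,t)\dot G(t)\tilde\Phi(T,t)-\int_t^T\tilde\Phi^\top(s,t)\tilde Q_t(t,s)\tilde\Phi(s,t)\,ds.
\end{align*}
This identity is exactly (\ref{1.9}) once one notes $\tilde\Phi=\Phi$ and that (\ref{2.3}) gives $\tilde Q_t-Q_t=\Upsilon^\top M_t\Upsilon-\Upsilon^\top S_t-S_t^\top\Upsilon$ with $\Upsilon=-\tilde u$. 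Together with $R(T)=G(T)-G(T)=0$, uniqueness for the homogeneous linear matrix ODE $\dot R=-\tilde A^\top R-R\tilde A$ forces $R\equiv 0$.

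With $\tilde P=P$, plugging $v=\tilde u(t)x$ into (\ref{2.7}) causes every term to cancel pairwise via (\ref{2.3}) and the defining relation $M(t,t)\tilde u(t)=-[B^\top(t)P(t)+S(t,t)]$, so $\tilde J(t,x,\bar u(t,x))=0$. Since $\bar u(t,x)$ is simultaneously the unique minimizer of the strictly convex quadratic $\tilde J(t,x,\cdot)$ and a point where $\tilde J$ vanishes, $\tilde J(t,x,v)\ge 0$ for every $v\in\mathbb R^m$; because the limit exists it bounds the liminf in (\ref{1.3}) from below, so $\bar u$ is an equilibrium. The main obstacle is the algebraic identification $\tilde P=P$: the cross-dependence of $\bar Q$ on $P$ through $\Upsilon$ and $\Phi$, combined with the two-argument structure of $Q$, $M$, $S$, makes the source-term bookkeeping delicate, and the cancellation only becomes transparent after rewriting everything through the closed-loop operator $\tilde A$.
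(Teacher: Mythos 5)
Your proposal is correct, but it reaches the conclusion by a different route than the paper. The paper's proof is a self-contained verification: it first integrates $\frac{d}{ds}\langle P(s)\bar{X}(s),\bar{X}(s)\rangle$ along the closed-loop trajectory to establish $J(t,x;\bar{u})=\langle P(t)x,x\rangle$, then expands $J(t,x;u^{\varepsilon,v})$ explicitly via the perturbed trajectory (\ref{2.29}), passes to the limit in the difference quotient, and uses the Riccati equation to collapse the result into the complete square $\left\langle M(t,t)\left[M^{-1}(t,t)\left(B^{\top}(t)P(t)+S(t,t)\right)x+v\right],M^{-1}(t,t)\left(B^{\top}(t)P(t)+S(t,t)\right)x+v\right\rangle\geq 0$, so nonnegativity is manifest. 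You instead recycle the first-variation formula (\ref{2.7}) from Step 1 of Proposition \ref{peng-proposition2.1} — and you are right that its derivation uses only the linear-feedback structure of $\bar{u}$, not the equilibrium property, so there is no circularity — and then reduce everything to the single algebraic identification $\tilde{P}=P$. Your bookkeeping there checks out: with $\tilde{u}=-\Upsilon$ one has $\tilde{\Phi}=\Phi$, $\tilde{Q}(t,t)-P B M^{-1}B^{\top}P+S^{\top}M^{-1}S=Q(t,t)$ and $\tilde{Q}_t-Q_t=\Upsilon^{\top}M_t\Upsilon-\Upsilon^{\top}S_t-S_t^{\top}\Upsilon$, so the source terms in the equation for $R=\tilde{P}-P$ cancel and $R\equiv 0$ follows from uniqueness for the homogeneous Lyapunov ODE with $R(T)=0$; the convexity-plus-vanishing-at-the-minimizer argument then gives $\tilde{J}\geq 0$, which is equivalent to the paper's complete square. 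What each approach buys: yours makes the two directions of the equivalence visibly symmetric and isolates the whole verification in one ODE-uniqueness lemma, while the paper's direct computation additionally produces the representation $J(t,x;\bar{u})=\langle P(t)x,x\rangle$ and avoids relying on formulas imported from the converse direction. One shared caveat (not a defect of your argument relative to the paper's): both computations evaluate limits of averages such as $\frac{1}{\varepsilon}\int_t^{t+\varepsilon}B^{\top}(\nu)\,d\nu$, which under (H1) hold only at Lebesgue points of $B$, whereas Definition \ref{DefEquilibrium} asks for the inequality at every $t$; this issue is inherited from the paper's own treatment.
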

\begin{proof}
	Let $P$ solves the equilibrim Riccati equation (\ref{1.8}), then
	\begin{eqnarray}\label{2.26}
	&&\dot{P}(s)+P(s)\left[ A(s)-B(s)M^{-1}(s,s)\left(B^{\top}(s)P(s)+S(s,s)\right)\right]\nonumber\\
	&&+\left[A(s)-B(s)M^{-1}(s,s)\left(B^{\top}(s)P(s)+S(s,s)\right)\right]^{\top}P(s)+\bar{Q}(s,s)\\
	&=&S^{\top}(s,s)   M^{-1}(s,s)S(s,s) -P(s)B(s)M^{-1}(s,s)B^{\top}(s)P(s), \mbox{ for any }s\in[0,T).\nonumber
	\end{eqnarray}
	Plug the feedback control $\bar{u}$ given by (\ref{2.24}) into the LODE (\ref{1.1}), we then have
	\[
	\left\{
	\begin{array}{ll}
	\dot{\bar{X}}(s)=A(s)\bar{X}(s)-B(s)M^{-1}(s,s)\left(B^{\top}(s)P(s)+S(s,s)\right)\bar{X}(s),& s\in(t,T], \\
	\bar{ X}(t)=x,
	\end{array}
	\right.
	\]
	which admits the unique solution $\bar{X}$ given by
	\begin{equation*} \label{2.27}
	\bar{X}(s)=X_{t,x}^{\bar{u}}(s)=\Phi(s,t)x\mbox{, } \forall(t,x)\in [0,T)\times\mathbb{R}^{n}, s\in[t,T].
	\end{equation*}
	It follows from (\ref{2.26}) and the representation of $\bar{X}$ that
	\begin{eqnarray}\label{w1}
		&&\frac{d}{ds}\langle P(s)\bar{X}(s),\bar{X}(s)\rangle+\left\langle P(s)B(s)M^{-1}(s,s)B^{\top}(s)P(s)\bar{X}(s),\bar{X}(s)\rangle\right\rangle\\
		&=&\left\langle S^{\top}(s,s)   M^{-1}(s,s)S(s,s) \bar{X}(s),\bar{X}(s)\rangle\right\rangle-\langle\bar{Q}(s,s)\bar{X}(s),\bar{X}(s)\rangle, \mbox{ for any }s\in[t,T).\nonumber
	\end{eqnarray}
Integrate (\ref{w1}) from $t$ to $T$ and plug the representations (\ref{1.8}),  (\ref{1.9}) and (\ref{2.24}) into the integral, we then have
	\begin{eqnarray*}
		\langle P(t)x,x\rangle&=&\langle G(T)\bar{X}(T),\bar{X}(T)\rangle+\int_{t}^{T}\left\langle M^{-1}(s,s)B^{\top}(s)P(s)\bar{X}(s),B^{\top}(s)P(s)\bar{X}(s)\right\rangle ds\\
		&&-\int_{t}^{T}\left\langle M^{-1}(s,s)S(s,s) \bar{X}(s),S(s,s)\bar{X}(s)\rangle\right\rangle ds+\int_{t}^{T}\langle\bar{Q}(s,s)\bar{X}(s),\bar{X}(s)\rangle ds
		\\
		&=&\int_{t}^{T}\langle Q(s,s)\bar{X}(s),\bar{X}(s)\rangle ds-\int_{t}^{T}\left\langle M^{-1}(s,s)S(s,s) \bar{X}(s),S(s,s)\bar{X}(s)\right\rangle ds\\
		&&+\langle G(T)\bar{X}(T),\bar{X}(T)\rangle+\int_{t}^{T}\left\langle M^{-1}(s,s)B^{\top}(s)P(s)\bar{X}(s),B^{\top}(s)P(s)\bar{X}(s)\right\rangle ds\\
		&&-\int_{t}^{T}\left\langle \dot{G}(s)\bar{X}(T),\bar{X}(T)\right\rangle ds-\int_{t}^{T}\int_{s}^{T}\left\langle Q_s(s,\tau)\bar{X}(\tau),\bar{X}(\tau)\right\rangle d\tau ds\\
		&&-\int_{t}^{T}\int_{s}^{T}\left\langle \left[M_s(s,\tau) \Upsilon(\tau)-2S_s(s,\tau)\right]\bar{X}(\tau),\Upsilon(\tau) \bar{X}(\tau)\right\rangle d\tau ds\\
		&=&\int_{t}^{T}\langle Q(t,s)\bar{X}(s),\bar{X}(s)\rangle ds-\int_{t}^{T}\left\langle M^{-1}(s,s)S(s,s) \bar{X}(s),S(s,s)\bar{X}(s)\right\rangle ds\\
		&&+\langle G(t)\bar{X}(T),\bar{X}(T)\rangle+\int_{t}^{T}\left\langle M^{-1}(s,s)B^{\top}(s)P(s)\bar{X}(s),B^{\top}(s)P(s)\bar{X}(s)\right\rangle ds\\
		&&-\int_{t}^{T}\left\langle M^{-1}(s,s)\left(B^{\top}(s)P(s)+S(s,s)\right) \bar{X}(s),\left(B^{\top}(s)P(s)-S(s,s)\right)\bar{X}(s)\right\rangle  ds\\
		&&+\int_{t}^{T}\left\langle [M(t,s) \bar{u}(s,\bar{X}(s))+2S(t,s)\bar{X}(s)],\bar{u}(s,\bar{X}(s))\right\rangle ds\\
		&=&\langle G(t)\bar{X}(T),\bar{X}(T)\rangle+\int_{t}^{T}\langle Q(t,s)\bar{X}(s),\bar{X}(s)\rangle ds\\
		&&+\int_{t}^{T}\left\langle [M(t,s) \bar{u}(s,\bar{X}(s))+2S(t,s)\bar{X}(s)],\bar{u}(s,\bar{X}(s))\right\rangle ds
	\end{eqnarray*}
which yields that
	\begin{eqnarray}\label{2.28}
	J(t,x;\bar{u})=\langle P(t)x,x\rangle\mbox{ for any }(t,x)\in [0,T]\times\mathbb{R}^{n}.
	\end{eqnarray}

In order to verify $\hat{u}$ given by (\ref{2.24}) is the equilibrium control,	 we consider the perturbation control $u^{\varepsilon,v}(t,x)$ given by (\ref{1.4}). Solving the control system (\ref{1.1}) with $u^{\varepsilon,v}(t,x)$, we have the control systems (\ref{1.1}) with $u^{\varepsilon,v}$ has a unique solution $ X_{t,x}^{u^{\varepsilon,v}}$ given by
	\begin{equation} \label{2.29}
	X_{t,x}^{u^{\varepsilon,v}}(s)=\left\{\begin{array}{ll}
	X^{v}_{t,x}(s),&s\in [t,t+\varepsilon],\\
	\Phi (s,t+\varepsilon)X^{v}_{t,x}(t+\varepsilon),&s\in (t+\varepsilon,T].
	\end{array}
	\right.
	\end{equation}
	It follows from  (\ref{1.2}) and  (\ref{2.29}), (\ref{2.24}), (\ref{2.28})  that
	\begin{eqnarray*}
		&&J\left(t,x;u^{\varepsilon,v}\right)\\
		&=&\int_{t}^{t+\varepsilon}\bigg[\left\langle Q(t,s)X^{v}_{t,x}(s),X^{v}_{t,x}(s)\right\rangle+2\left\langle S(t,s)X^{v}_{t,x}(s),v\right\rangle+\left\langle M(t,s)v,v\right\rangle\bigg]ds\\
		&&+\int_{t+\varepsilon}^{T}\left\langle (Q(t,s)-Q(t+\varepsilon,s))\Phi (s,t+\varepsilon)X^{v}_{t,x}(t+\varepsilon),\Phi (s,t+\varepsilon)X^{v}_{t,x}(t+\varepsilon)\right\rangle ds\\
		&&-2\int_{t+\varepsilon}^{T}\bigg\langle   (S(t,s)-S(t+\varepsilon))\Phi (s,t+\varepsilon)X^{v}_{t,x}(t+\varepsilon),\\
		&&M^{-1}(s,s)\left(B^{\top}(s)P(s)+S(s,s)\right)\Phi (s,t+\varepsilon)X^{v}_{t,x}(t+\varepsilon)\bigg\rangle ds\\
		&&+\int_{t+\varepsilon}^{T}\bigg \langle  (M(t,s)-M(t+\varepsilon,s))M^{-1}(s,s)\left(B^{\top}(s)P(s)+S(s,s)\right)\Phi (s,t+\varepsilon)X^{v}_{t,x}(t+\varepsilon),\\
		&&M^{-1}(s,s)\left(B^{\top}(s)P(s)+S(s,s)\right)\Phi (s,t+\varepsilon)X^{v}_{t,x}(t+\varepsilon)\bigg\rangle ds\\
		&&+\left\langle (G(t)-G(t+\varepsilon))\Phi (T,t+\varepsilon)X^{v}_{t,x}(t+\varepsilon),\Phi (T,t+\varepsilon)X^{v}_{t,x}(t+\varepsilon)\right\rangle\\
		&&+\left\langle  P(t+\varepsilon)X^{v}_{t,x}(t+\varepsilon),X^{v}_{t,x}(t+\varepsilon)\right\rangle.
	\end{eqnarray*}
Moreover, it follows from (\ref{2.29})  that $\lim\limits_{\varepsilon \rightarrow0}	X_{t,x}^{u^{\varepsilon,v}}(\cdot)=\Phi (\cdot,t)x $ in $ C\left([t,T];\mathbb{R}^n\right)$, then we have
	\begin{eqnarray*}
		&&\lim_{\varepsilon\searrow 0} \frac{J\left(t,x;u^{\varepsilon,v}\right)- J\left(t,x;\bar{u}\right)}{\varepsilon}\nonumber\\
		&=&\left\langle Q(t,t)x,x\right\rangle+2\left\langle S(t,t)x,v\right\rangle+\left\langle M(t,t)v,v\right\rangle-\int_{t}^{T}\left\langle Q_t(t,s) \Phi (s,t )x,\Phi (s,t )x\right\rangle ds\\
		&&+2\int_{t}^{T}\bigg\langle  S_t(t,s) \Phi (s,t )x,M^{-1}(s,s)\left(B^{\top}(s)P(s)+S(s,s)\right)\Phi (s,t ) x\bigg\rangle ds\\
		&&-\int_{t }^{T}\bigg \langle M_t(t,s)M^{-1}(s,s)\left(B^{\top}P(s)+S(s,s)\right)\Phi (s,t)x,\\
		&&M^{-1}(s,s)\left(B^{\top}(s)P(s)+S(s,s)\right)\Phi (s,t)x\bigg\rangle ds+2\left\langle B^{\top}(t)P(t)x,v\right\rangle\\
		&&-\left\langle \dot{G}(t)\Phi (T,t)x,\Phi (T,t )x\right\rangle+\left\langle \left( \dot{P}(t)+P(t)A(t)+A^{\top}(t)P(t)\right)x,x\right\rangle.
	\end{eqnarray*}
	which, together with (\ref{1.8}), yields that
	\begin{eqnarray*}\label{2.30}
	&&\lim_{\varepsilon\searrow 0} \frac{J\left(t,x;u^{\varepsilon,v}\right)- J\left(t,x;\bar{u}\right)}{\varepsilon}\nonumber\\
	&=&\left\langle \left( \dot{P}(t)+P(t)A(t)+A^{\top}(t)P(t)+\bar{Q}(t,t)\right)x,x\right\rangle+2\left\langle \left(B^{\top}(t)P(t)+S(t,t)\right)x,v\right\rangle +\left\langle M(t,t)v,v\right\rangle\nonumber\\
	&=&\left\langle M^{-1}(t,t)\left[B^{\top}(t)P(t)+S(t,t)\right]x,\left[B^{\top}(t)P(t)+S(t,t)\right]x\right\rangle\\
	&&+2\left\langle \left[B^{\top}(t)P(t)+S(t,t)\right]x,v\right\rangle +\left\langle M(t,t)v,v\right\rangle\nonumber\\
	&=&\left\langle M(t,t)\left[M^{-1}(t,t)\left(B^{\top}(t)P(t)+S(t,t)\right)x+v\right],  M^{-1}(t,t)\left(B^{\top}(t)P(t)+S(t,t)\right)x+v\right\rangle\nonumber\\
	&\geq&0, \quad \forall(t,x,v)\in [0,T)\times\mathbb{R}^{n}\times\mathbb{R}^{m}.\nonumber
	\end{eqnarray*}
This suggests that $\bar{u}$ is an equilibrium control and thus completing the proof.
\end{proof}
\section{Solvability of the time inconsistent LQ problem}\label{Sec:Riccati}
\subsection{Solvability of the equilibrium Riccati equation}
In this section, we will study the solvability of the equilibrium Riccati equation  (\ref{1.8}). Throughout this section, besides Assumptions (H1)-(H4), we work with the following assumption
\begin{itemize}
\item[(H5)] 
$Q_t(t,s)$, $M_t(t,s)$, $\dot{G}(t)$, $Q(t,s)-S^{\top}(t,s)M^{-1}(t,s)S(t,s)$ and $Q_t(t,s)-S_t^{\top}(t,s)M_t^{-1}(t,s)S_t(t,s)$  are positive semi-definite, $\forall 0\leq t\leq  s \leq T$.
\end{itemize}
\begin{remark}
In the one dimensional case with $S\equiv 0$, let us suppose that $Q(t,s) =Q(s-t), M(t,s) = M(s-t), G(T-t) = G(t)$ and view $Q, M, G$ as discount functions. Then Assumption (H5) states the decreasing property of discount functions, which is commonly assumed in economics. 
\end{remark}
We define a map $\mathbb{F}$ as follows:
\begin{eqnarray}\label{7.2}
\mathbb{F}(t;s,P)&=&\Phi^{\top}(T,t)\dot{G}(s)\Phi(T,t)+\int_{t}^{T}\Phi^{\top}(\tau,t)Q_s(s,\tau)\Phi(\tau,t)d\tau\\
&&+\int_{t}^{T}\Phi^{\top}(\tau,t)\bigg[\Upsilon^{\top}(\tau)M_s(s,\tau) \Upsilon(\tau)-\Upsilon^{\top}(\tau)S_s(s,\tau)-S^{\top}_s(s,\tau)\Upsilon(\tau)\bigg]\Phi(\tau,t)d\tau,\;\;\forall s,t\in[0,T],\nonumber
\end{eqnarray}
It is easy to see that $\mathbb{F}(t;s,P)$ is symmetric and
\begin{eqnarray*}\label{7.3}
\bar{Q}(t,t)=Q(t,t)-\mathbb{F}(t;t,P)\mbox{ for all }t\in [0,T].
\end{eqnarray*}
\begin{lemma}\label{peng-l5.1} Suppose $P_1,P_2\in C\left([0,T];\mathbb{R}^{n\times n}\right)$ are symmetric matrix-value functions, then
\begin{eqnarray}\label{7.6}
\|\mathbb{F}(s;t,P_2)-\mathbb{F}(s;t,P_1)\|\leq 4T\gamma e^{4\omega}
\|P_1-P_2\|_{C}\mbox{ for any }s,t\in [0,T],
\end{eqnarray}
 where
\begin{eqnarray}\label{7.7}
\left\{\begin{array}{ll}
\omega=\|A\|_{L^1}+T\|B\|_{L^{\infty}}\|M^{-1}\|_{C}\left(\|B\|_{L^{\infty}}\left(\|P_1\|_{C}+\|P_2\|_{C}\right)+\|S\|_C\right),\\
\alpha=\|M^{-1}\|_{C}\left(\|B\|_{L^{\infty}}\left(\|P_1\|_{C}+\|P_2\|_{C}\right)+\|S\|_C\right),\\
\gamma=\left\|M^{-1}\right\|_C\left(1+\|B\|_{L^{\infty}}\right)^2\bigg[\|G\|_{C^1}+T\|Q\|_{C^1}+
(1+T\alpha)\left(\alpha\|M\|_{C^1}+\|S\|_{C^1}\right)\bigg].
\end{array}\right.
\end{eqnarray}
\end{lemma}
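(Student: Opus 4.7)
The plan is to control $\|\mathbb{F}(s;t,P_2)-\mathbb{F}(s;t,P_1)\|$ by first establishing Lipschitz dependence of the auxiliary objects $\Upsilon$ and $\Phi$ on $P$, then decomposing the difference of the three integrals in (\ref{7.2}) term by term and estimating each piece with the supplied $C^1$-norms of the data. Throughout I would write $\Upsilon_i,\Phi_i$ for the objects built from $P_i$ via (\ref{Phi1}), and exploit the fact that $\omega$ in (\ref{7.7}) is tailored precisely so that $\|A\|_{L^1}+T\|B\|_{L^\infty}\|\Upsilon_i\|_C\leq \omega$; this immediately gives the a priori bound $\|\Phi_i(\tau,t)\|\leq e^{\omega}$ for all $t,\tau\in[0,T]$ and $i=1,2$.

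Next I would derive the two base Lipschitz estimates. The identity $\Upsilon_2-\Upsilon_1=M^{-1}B^\top(P_2-P_1)$ yields $\|\Upsilon_2-\Upsilon_1\|_C\leq \|M^{-1}\|_C\|B\|_{L^\infty}\|P_2-P_1\|_C$. For the flow, subtracting $\dot{\Phi}_i=F_i\Phi_i$ with $F_i:=A-B\Upsilon_i$ and applying Gronwall to
\[
\Phi_2(\tau,t)-\Phi_1(\tau,t)=\int_t^{\tau}\bigl[(F_2-F_1)\Phi_2+F_1(\Phi_2-\Phi_1)\bigr]\,ds
\]
together with $\|F_2-F_1\|\leq\|B\|_{L^\infty}\|\Upsilon_2-\Upsilon_1\|$ and $\int_t^\tau\|F_1\|\leq\omega$ gives a bound of the form $\|\Phi_2(\tau,t)-\Phi_1(\tau,t)\|\leq T\|B\|_{L^\infty}^2\|M^{-1}\|_C\,e^{2\omega}\|P_2-P_1\|_C$, uniformly in $t,\tau$.

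With these estimates in hand I would split $\mathbb{F}(s;t,P_2)-\mathbb{F}(s;t,P_1)$ according to the three summands in (\ref{7.2}): the $\dot G$-boundary piece, the $Q_s$-integral, and the mixed integral containing $\Upsilon^\top M_s\Upsilon-\Upsilon^\top S_s-S_s^\top\Upsilon$. On each outer pair $\Phi^\top X\Phi$ I would use the telescoping identity
\[
\Phi_2^\top X\Phi_2-\Phi_1^\top X\Phi_1=(\Phi_2-\Phi_1)^\top X\Phi_2+\Phi_1^\top X(\Phi_2-\Phi_1),
\]
and on the quadratic $\Upsilon^\top M_s\Upsilon$ piece the analogous splitting $(\Upsilon_2-\Upsilon_1)^\top M_s\Upsilon_2+\Upsilon_1^\top M_s(\Upsilon_2-\Upsilon_1)$; the linear $\Upsilon^\top S_s$ pieces are handled directly by the $\Upsilon$-Lipschitz bound. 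Each of the resulting terms is then dominated by $\|P_2-P_1\|_C$ times a product of the a priori norm $e^{\omega}$ (appearing up to four times, which accounts for $e^{4\omega}$), one of the factors $\|G\|_{C^1},\|Q\|_{C^1},\|M\|_{C^1},\|S\|_{C^1}$, and the coefficient bounds built into $\alpha$ and $\gamma$. Summing over the finitely many pieces and absorbing everything into the compactly packaged constant $\gamma$ yields (\ref{7.6}).

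The main obstacle is purely organisational rather than analytic: one must verify that the many cross terms generated by the telescoping identities collapse precisely into the form $4T\gamma e^{4\omega}\|P_1-P_2\|_C$ advertised in the statement, matching each $C^1$-derivative with either the $P$-difference, an $\Upsilon$-difference, or a $\Phi$-difference, and checking that the exponent $4\omega$ is indeed the worst case across all summands (two factors of $e^{\omega}$ from $\Phi^\top\cdots\Phi$ and two more coming from the Gronwall estimate for $\Phi_2-\Phi_1$). Once this bookkeeping is executed carefully and the definition of $\gamma$ in (\ref{7.7}) is unfolded, the stated inequality follows.
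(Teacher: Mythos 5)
Your argument is correct, but it is organised differently from the paper's. You prove the Lipschitz bound directly from the definition (\ref{7.2}): first a Gronwall estimate for the flow difference $\Phi_2-\Phi_1$ (of order $T\|B\|_{L^{\infty}}^2\|M^{-1}\|_C e^{2\omega}\|P_2-P_1\|_C$) together with the elementary bound $\|\Upsilon_2-\Upsilon_1\|_C\leq \|M^{-1}\|_C\|B\|_{L^{\infty}}\|P_2-P_1\|_C$, and then a term-by-term telescoping of the three summands of $\mathbb{F}$. The paper instead fixes $s$, observes that $\mathbb{E}(t)=\mathbb{F}(t;s,P_2)-\mathbb{F}(t;s,P_1)$ satisfies a terminal-value Lyapunov equation whose source term is already linear in $P_1-P_2$, solves that equation by variation of constants, and reads the Lipschitz constant off the resulting integral representation; the flow difference $\Phi_2-\Phi_1$ never has to be estimated. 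The trade-off is that the paper must verify by differentiation that $\mathbb{F}(\cdot;s,P)$ obeys the claimed ODE (its ``some algebra'' step) and must first bound $\|\mathbb{F}(\cdot;s,P_1)\|_C$ as in (\ref{7.11}), whereas your route is more elementary but generates more cross terms and an extra exponential factor from Gronwall. Two small bookkeeping remarks: your worst term actually carries $e^{3\omega}$ (one $e^{2\omega}$ from the Gronwall bound times one $e^{\omega}$ from the surviving flow factor), not $e^{4\omega}$ as you suggest, so your final constant sits strictly below the stated $4T\gamma e^{4\omega}$ rather than matching it exactly; and since the subsequent fixed-point argument only needs \emph{some} Lipschitz constant of this qualitative shape, the discrepancy is harmless. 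Both proofs are valid.
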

\begin{proof}
Define 
\begin{eqnarray}\label{7.8}
\Psi_i(t,s)=\exp\left(\int_{s}^{t}\left(A(\tau)-B(\tau)M^{-1}(\tau,\tau)\left(B^{\top}(\tau)P_i(\tau)+S(\tau,\tau)\right)\right)d\tau\right),\;\;s,t\in[0,T], i=1,2.
\end{eqnarray}
Then we have
\begin{eqnarray}\label{7.10}
\|\Psi_i(t,s)\|\leq\exp\left(\|A\|_{L^1}+T\alpha\|B\|_{L^{\infty}}\right),\;\;s,t\in[0,T],\;\; i=1,2.
\end{eqnarray}
Thus, it follows from (\ref{7.2}) and (\ref{7.7}) that
\begin{eqnarray}\label{7.11}
\|\mathbb{F}(t;s,P_1)\|&\leq&\|\Psi_1(T,t)\|^2\|\dot{G}(s)\|+\int_{t}^{T}\|\Psi_1(\tau,t)\|^2\bigg[\|Q_s(s,\tau)\|
+\bigg\|\bigg(B^{\top}(\tau)P_1(\tau)\nonumber\\
&&+S(\tau,\tau)\bigg)^{\top}M^{-1}(\tau,\tau)\left(M_s(s,\tau) M^{-1}(\tau,\tau)\left(B^{\top}(\tau)P_1(\tau)+S(\tau,\tau)\right)-2S_s(s,\tau)\right)\bigg\|\bigg]d\tau\nonumber\\
&\leq&\left[\|G\|_{C^1}+T\|Q\|_{C^1}+T\alpha\left(\alpha\|M\|_{C^1}+2\|S\|_{C^1}\right)\right]\exp\left(2\|A\|_{L^1}+2T\alpha\|B\|_{L^{\infty}}\right).
\end{eqnarray}
For any fixed  $s\in[0,T]$, we define
\begin{eqnarray*}\label{7.12}
\mathbb{E}(t)=\mathbb{F}(t;s,P_2)-\mathbb{F}(t;s,P_1)\mbox{ for all }t\in[0,T].
\end{eqnarray*}
Then,  some algebra yields that $\mathbb{E}\in C\left([0,T];\mathbb{R}^{n\times n}\right)\bigcap W^{1,1}\left((0,T);\mathbb{R}^{n\times n}\right)$ satisfies the following lyapunov  equation
\begin{eqnarray*}
\left\{\begin{array}{ll}
\dot{\mathbb{E}}(t)+\left(A(t)-B(t)\Upsilon_2(t)\right)^{\top}\mathbb{E}(t)+\mathbb{E}(t)\left(A(t)-B(t)\Upsilon_2(t)\right)\\
\qquad+\left(P_1(t)-P_2(t)\right)^{\top}B(t)M^{-1}(t,t)\left[B^{\top}(t)\mathbb{F}(t;s,P_1)-M_s(s,t) \Upsilon_1(t)+S_s(s,t)\right]\\
\qquad+\left[\mathbb{F}(t;s,P_1)B(t)-\Upsilon^{\top}_2(t)M_s(s,t)+S^{\top}_s(s,t)\right]M^{-1}(t,t)B^{\top}(t)\left(P_1(t)-P_2(t)\right)=0,&t\in[0,T),\\
\mathbb{E}(T)=0,
\end{array}\right.
\end{eqnarray*}
where $\Upsilon_i(t)=M^{-1}(t,t)\left(B^{\top}(t)P_i(t)+S(t,t)\right)$, $i=1$, $2$. \par
Solving the  above lyapunov  equation with  (\ref{7.8}), we have that
\begin{eqnarray*}
\mathbb{E}(t)&=&\int_{t}^{T}\Psi_2^{\top}(\tau,t)\bigg[\left[\mathbb{F}(\tau;s,P_1)B(\tau)-\Upsilon^{\top}_2(\tau)M_s(s,\tau)+S^{\top}_s(s,\tau)\right]M^{-1}(\tau,\tau)B^{\top}(\tau)\left(P_1(\tau)-P_2(\tau)\right)\\
&&+\left(P_1(\tau)-P_2(\tau)\right)^{\top}B(\tau)M^{-1}(\tau,\tau)\left(B^{\top}(\tau)\mathbb{F}(\tau;s,P_1)-M_s(s,\tau) \Upsilon_1(t)+S_s(s,\tau)\right)\bigg]\Psi_2(\tau,t)d\tau.\nonumber
\end{eqnarray*}
As $\mathbb{E}(t)=\mathbb{F}(t;s,P_2)-\mathbb{F}(t;s,P_1)$, it follows from (\ref{7.7}),  (\ref{7.10}) and (\ref{7.11}) that
\begin{eqnarray*}
\left\|\mathbb{F}(t;s,P_2)-\mathbb{F}(t;s,P_1)\right\|&\leq&2Te^{2\omega}\|B\|_{L^{\infty}}\left\|M^{-1}\right\|_C\bigg[\|\mathbb{F}(\cdot;s,P_1)\|_C\|B\|_{L^{\infty}}+\alpha\|M\|_{C^1} +\|S\|_{C^1}\bigg] \|P_2-P_1\|_C\nonumber\\
&\leq&4Te^{4\omega}\left\|M^{-1}\right\|_C\left(1+\|B\|_{L^{\infty}}\right)^2\bigg[\|G\|_{C^1}+T\|Q\|_{C^1}\nonumber\\
&&+
(1+T\alpha)\left(\alpha\|M\|_{C^1}+\|S\|_{C^1}\right)\bigg] \|P_2-P_1\|_C\nonumber\\
&\leq&4\gamma Te^{4\omega} \|P_2-P_1\|_C\nonumber
\end{eqnarray*}
This completes the proof.
\end{proof}\par 
The following lemma, which demonstrates different types of solutions to the Riccati equation (\ref{1.8}), will be used in the proof of the main result of this section.
\begin{lemma}\label{peng-l5.2}  
Suppose  $P(\cdot)\in C\left([0,T];\mathbb{R}^{n\times n}\right)$ is symmetric. Then the following statements are equivalent.
\begin{itemize}
\item[(i)]  $P$ solves the Riccati differential equation (\ref{1.8}).
\item[(ii)]  $P$ solves the following Riccati equation 
\begin{eqnarray*}\label{RpP1}
P(t)&=&\Psi^{\top}(T,t)G(T)\Psi(T,t)+\int_t^T\Psi^{\top}(s,t)\bigg[Q(s,s)-\mathbb{F}(s;s,P)\\
&&-\left(B^{\top}(s)P(s)+S(s,s)\right)^{\top}M^{-1}(s,s)\left(B^{\top}(s)P(s)+S(s,s)\right)\bigg]\Psi(s,t)ds,\;\;t\in[0,T]\nonumber
\end{eqnarray*}
where
\begin{eqnarray*}\label{Psi}
\Psi(s,t)=\exp\left(\int_t^s A(\tau)d\tau\right)\mbox{ for all }s,t\in[0,T].
\end{eqnarray*}
\item[(iii)]  $P$ solves the following Riccati integral equation
\begin{eqnarray*}\label{RpP2}
P(t)&=&\Phi^{\top}(T,t)G(T)\Phi(T,t)+\int_t^T\Phi^{\top}(s,t)\bigg[P(s)B(s)M^{-1}(s,s)B^{\top}(s)P(s)\\
&&-\mathbb{F}(s;s,P)+Q(s,s)-S^{\top}(s,s)M^{-1}(s,s)S(s,s)\bigg]\Phi(s,t)ds,\;\;t\in[0,T].\nonumber
\end{eqnarray*}
\end{itemize}
\end{lemma}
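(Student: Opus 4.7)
The plan is to prove the equivalence via the cycle $(i)\Longleftrightarrow(ii)$ and $(i)\Longleftrightarrow(iii)$, treating the Riccati differential equation (\ref{1.8}) as a matrix Lyapunov equation driven by a $P$-dependent forcing term and using two different integrating factors, namely $\Psi$ (built from $A$ alone) for (ii) and $\Phi$ (built from $A-B\Upsilon$) for (iii). Throughout, I will rely on the identity $\bar{Q}(t,t)=Q(t,t)-\mathbb{F}(t;t,P)$ noted after (\ref{7.2}), and on the fact that both $\Psi(\cdot,t)$ and $\Phi(\cdot,t)$ are absolutely continuous fundamental matrices whose derivative with respect to the lower argument satisfies $\partial_t\Psi(s,t)=-\Psi(s,t)A(t)$ and $\partial_t\Phi(s,t)=-\Phi(s,t)(A(t)-B(t)\Upsilon(t))$ for a.e.\ $t$ (a standard consequence of the semigroup property under the $L^1$ regularity of the coefficients).

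For $(i)\Rightarrow(ii)$, I would rewrite (\ref{1.8}) as the Lyapunov-type identity
\begin{equation*}
\dot{P}(t)+A^{\top}(t)P(t)+P(t)A(t)=-Q(t,t)+\mathbb{F}(t;t,P)+\left[P(t)B(t)+S^{\top}(t,t)\right]M^{-1}(t,t)\left[B^{\top}(t)P(t)+S(t,t)\right],
\end{equation*}
compute $\tfrac{d}{ds}\bigl[\Psi^{\top}(s,t)P(s)\Psi(s,t)\bigr]$, and integrate from $t$ to $T$ using $P(T)=G(T)$. This gives (ii) immediately. For $(i)\Rightarrow(iii)$, the same strategy applies, except that I must first absorb the cross terms into the drift $A-B\Upsilon$: using $\Upsilon(s)=M^{-1}(s,s)(B^{\top}(s)P(s)+S(s,s))$ and the symmetry of $P$, one verifies
\begin{equation*}
A^{\top}P+PA=(A-B\Upsilon)^{\top}P+P(A-B\Upsilon)+PBM^{-1}B^{\top}P+PBM^{-1}S+S^{\top}M^{-1}B^{\top}P,
\end{equation*}
so that after substitution the Riccati equation takes the form
\begin{equation*}
\dot{P}(t)+(A-B\Upsilon)^{\top}P(t)+P(t)(A-B\Upsilon)=-P(t)B(t)M^{-1}(t,t)B^{\top}(t)P(t)-Q(t,t)+\mathbb{F}(t;t,P)+S^{\top}(t,t)M^{-1}(t,t)S(t,t).
\end{equation*}
Differentiating $\Phi^{\top}(s,t)P(s)\Phi(s,t)$ and integrating from $t$ to $T$ yields (iii).

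For the reverse implications $(ii)\Rightarrow(i)$ and $(iii)\Rightarrow(i)$, observe that both integral expressions exhibit $P$ as an absolutely continuous function on $[0,T]$ (the integrand is bounded under (H1)--(H4) because $P$ is continuous), so $P$ is differentiable almost everywhere and the terminal condition $P(T)=G(T)$ is obtained by setting $t=T$. Differentiating in $t$ and collecting the boundary contribution from the lower limit together with the contributions from $\partial_t\Psi$ or $\partial_t\Phi$ produces, in the first case,
\begin{equation*}
\dot{P}(t)=-A^{\top}(t)P(t)-P(t)A(t)-\left\{Q(t,t)-\mathbb{F}(t;t,P)-\left[B^{\top}(t)P(t)+S(t,t)\right]^{\top}M^{-1}(t,t)\left[B^{\top}(t)P(t)+S(t,t)\right]\right\},
\end{equation*}
which is (\ref{1.8}); the case $(iii)\Rightarrow(i)$ is analogous, with the extra quadratic term $PBM^{-1}B^{\top}P$ combining with the cross terms produced by $\partial_t\Phi$ to reproduce the standard form.

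The main obstacle is bookkeeping in the algebraic rearrangement for $(i)\Rightarrow(iii)$: the symmetric square $[PB+S^{\top}]M^{-1}[B^{\top}P+S]$ must split correctly between the drift absorbed into $A-B\Upsilon$ and the residual quadratic $PBM^{-1}B^{\top}P-S^{\top}M^{-1}S$ under the integral. A secondary technical point is justifying the differentiation of $\Psi(s,t)$ and $\Phi(s,t)$ in the lower argument under the merely $L^{1}$/ $L^{\infty}$ regularity of $A$ and $B$; this follows from the semigroup identity $\Psi(s,t)=\Psi(s,0)\Psi(0,t)$ together with absolute continuity of $\Psi(0,\cdot)$, and likewise for $\Phi$. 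Once these two points are handled, Fubini and the fundamental theorem of calculus close all four implications in a few lines each.
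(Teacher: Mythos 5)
Your proposal is correct and is essentially the paper's own argument: the paper likewise proves the equivalences by conjugating $P$ with the two fundamental matrices (it differentiates $(\Psi^{-1})^{\top}(T,t)P(t)\Psi^{-1}(T,t)$ and $(\Phi^{-1})^{\top}(T,t)P(t)\Phi^{-1}(T,t)$ and invokes the Riccati equation), which is the same integrating-factor computation you carry out, including the absorption of the cross terms into the drift $A-B\Upsilon$ for statement (iii). The only blemish is the intermediate identity for $A^{\top}P+PA$, where the quadratic term should read $2PBM^{-1}B^{\top}P$ rather than $PBM^{-1}B^{\top}P$; the Lyapunov-form equation you then display (and hence the derivation of (iii)) is nevertheless the correct one.
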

\begin{proof}
The result is an immediate consequence of direct calculation. Indeed, let\\ $Y(t):=(\Psi^{-1})^T(T,t)P(t)\Psi^{-1}(T,t)$. Then the the equivalence between (i) and (ii) follows from taking the first order derivative on $Y$ and the Riccati equation (\ref{1.8}). Similarly, let $Z(t):=(\Phi^{-1})^T(T,t)P(t)\Phi^{-1}(T,t)$. Then the the equivalence between (i) and (iii) follows from taking the first order derivative on $Z$ and the Riccati equation (\ref{1.8}). This completes the proof.
\end{proof}\par 
The following Lemma provides a prior estimate for the solution to the Riccati equation.\par 
\begin{lemma}\label{peng-l5.3} Suppose $P(\cdot)\in C\left([0,T];\mathbb{R}^{n\times n}\right)$ is symmetric and solves the Riccati equation (\ref{1.8}), then
$\mathbb{F}(\cdot;s,P)$ and $P$ are positive semi-definite and
\begin{eqnarray}\label{7.14}
\| P\|_C
\leq  e^{2\|A\|_{L^1}}\left( \|G(T)\|+T\|Q\|_C\right).
\end{eqnarray}
\end{lemma}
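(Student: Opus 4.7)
I would organise the proof into three tasks: (a) show $\mathbb{F}(t;s,P)\succeq0$ by completing the square in the integrand of (\ref{7.2}) and invoking (H5); (b) deduce $P(t)\succeq0$ from the cost-functional identity $J(t,x;\bar u)=\langle P(t)x,x\rangle$ established in Proposition \ref{peng-proposition2.3}; and (c) derive the norm estimate (\ref{7.14}) by inserting $\mathbb{F}\succeq0$ into representation (ii) of Lemma \ref{peng-l5.2} and exploiting the Loewner order.

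For (a), the boundary term $\Phi^{\top}(T,t)\dot{G}(s)\Phi(T,t)$ in (\ref{7.2}) is PSD since $\dot{G}(s)\succeq0$ by (H5). The combined integrand of the two integrals in (\ref{7.2}) can be regrouped as $Q_s(s,\tau)+\Upsilon^{\top}(\tau)M_s(s,\tau)\Upsilon(\tau)-\Upsilon^{\top}(\tau)S_s(s,\tau)-S_s^{\top}(s,\tau)\Upsilon(\tau)$, and a completion of squares rewrites this as
\[
\bigl(\Upsilon-M_s^{-1}S_s\bigr)^{\top}M_s\bigl(\Upsilon-M_s^{-1}S_s\bigr)+\bigl(Q_s-S_s^{\top}M_s^{-1}S_s\bigr).
\]
Both summands are PSD under (H5) ($M_s\succeq0$, and the Schur-complement-type hypothesis $Q_t-S_t^{\top}M_t^{-1}S_t\succeq0$), and conjugation by $\Phi(\tau,t)$ preserves PSD. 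Hence $\mathbb{F}(t;s,P)\succeq0$.

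For (b), since $P$ solves (\ref{1.8}), Proposition \ref{peng-proposition2.3} produces the feedback equilibrium $\bar u$ and the identity $J(t,x;\bar u)=\langle P(t)x,x\rangle$ (cf.\ (\ref{2.28})). Under (H5) the running-cost block $\left(\begin{smallmatrix}Q(t,s)&S^{\top}(t,s)\\ S(t,s)&M(t,s)\end{smallmatrix}\right)$ is PSD (Schur complement $Q-S^{\top}M^{-1}S\succeq0$ with $M\succ0$ by (H2)), so the integrand of (\ref{1.2}) is pointwise nonnegative; combined with $G(t)\succeq0$ from (H3), this yields $J(t,x;\bar u)\geq0$, and therefore $P(t)\succeq0$.

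For (c), Lemma \ref{peng-l5.2}(ii) gives
\[
P(t)=\Psi^{\top}(T,t)G(T)\Psi(T,t)+\int_t^T\Psi^{\top}(s,t)\bigl[Q(s,s)-\mathbb{F}(s;s,P)-(B^{\top}P+S)^{\top}M^{-1}(B^{\top}P+S)\bigr]\Psi(s,t)\,ds.
\]
Because $\mathbb{F}(s;s,P)\succeq0$ by (a) and $(B^{\top}P+S)^{\top}M^{-1}(B^{\top}P+S)\succeq0$, the bracketed expression is $\preceq Q(s,s)$ in Loewner order, so for every $x$
\[
0\leq\langle P(t)x,x\rangle\leq\langle G(T)\Psi(T,t)x,\Psi(T,t)x\rangle+\int_t^T\langle Q(s,s)\Psi(s,t)x,\Psi(s,t)x\rangle\,ds,
\]
and combining with the Gronwall-type bound $\|\Psi(s,t)\|\leq e^{\|A\|_{L^1}}$ produces (\ref{7.14}). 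The main anticipated obstacle is making step (a) fully rigorous when $M_s$ is only PSD rather than PD, in which case $M_s^{-1}$ need not exist; I would resolve this either by the regularisation $M_s+\varepsilon I$ followed by passage to the limit $\varepsilon\searrow0$, or by reading (H5) as the block-matrix assertion $\left(\begin{smallmatrix}Q_s&S_s^{\top}\\ S_s&M_s\end{smallmatrix}\right)\succeq0$ and arguing directly. A secondary point is passing from Loewner estimates to the $\ell^{\infty}$-operator norm used in the paper, which for symmetric PSD matrices follows from standard matrix-norm equivalences.
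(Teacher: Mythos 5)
Your proposal is correct, and parts (a) and (c) follow the paper's own argument almost verbatim: the same completion of squares $\bigl(\Upsilon-M_s^{-1}S_s\bigr)^{\top}M_s\bigl(\Upsilon-M_s^{-1}S_s\bigr)+\bigl(Q_s-S_s^{\top}M_s^{-1}S_s\bigr)$ under (H5) gives $\mathbb{F}\succeq0$, and the same use of Lemma \ref{peng-l5.2}(ii) with $\|\Psi(s,t)\|\leq e^{\|A\|_{L^1}}$ gives (\ref{7.14}). Where you genuinely diverge is step (b): the paper does not invoke Proposition \ref{peng-proposition2.3}; instead it proves $P\succeq0$ self-containedly from the representation in Lemma \ref{peng-l5.2}(iii), exchanging the order of integration in the double integral so that $G(T)-\int_t^T\dot G$, $Q(s,s)-\int$, etc.\ reassemble into $G(t)$, $Q(t,s)$, $M(t,s)$, $S(t,s)$, and then completing the square once more to land on terms that are nonnegative by (H2), (H3), (H5). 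Your route quotes the identity $J(t,x;\bar u)=\langle P(t)x,x\rangle$ (equation (\ref{2.28})) together with pointwise nonnegativity of the running cost via the Schur complement; this is legitimate (the lemma's hypothesis is exactly what Proposition \ref{peng-proposition2.3} needs, and there is no circularity), shorter, and makes the meaning of $P$ transparent, at the cost of importing the verification computation rather than keeping the estimate internal to the Riccati analysis --- in substance the two arguments are the same calculation packaged differently. Your two flagged caveats (possible non-invertibility of $M_s$, and passing from the quadratic-form bound to the max-row-sum norm, which strictly costs a dimensional constant) are points the paper itself glosses over, so raising them is a mild improvement rather than a defect.
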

\begin{proof}
It follows from  the assumption (H5) and  (\ref{7.2}) that
\begin{eqnarray}\label{7.15}
&&\langle\mathbb{F}(t;s,P)x,x\rangle\nonumber\\
&=&\langle\dot{G}(s)\Phi(T,t)x,\Phi(T,t)x\rangle+\int_{t}^{T}\left\langle \left[Q_s(s,\tau)-2\Upsilon^{\top}(\tau)S_s(s,\tau)+\Upsilon^{\top}(\tau)M_s(s,\tau) \Upsilon(\tau)\right]\Phi(\tau,t)x,\Phi(\tau,t)x\right\rangle d\tau\nonumber\\
&=&\langle\dot{G}(s)\Phi(T,t)x,\Phi(T,t)x\rangle+\int_{t}^{T}\left\langle \left(Q_s(s,\tau)-2S^{\top}_s(s,\tau)M^{-1}_s(s,\tau)S_s(s,\tau)\right)\Phi(\tau,t)x,\Phi(\tau,t)x\right\rangle d\tau\nonumber\\
&&+\int_{t}^{T}\left\langle M_s(s,\tau)\left[\Upsilon(\tau)-M^{-1}_s(s,\tau)S_s(s,\tau)\right]\Phi(\tau,t)x,\left[\Upsilon(\tau)-M^{-1}_s(s,\tau)S_s(s,\tau)\right]\Phi(\tau,t)x\right\rangle d\tau\nonumber\\
&\geq&0 \mbox{ for all }\;\;(s,t,x)\in[0,T]\times [0,T]\times \mathbb{R}^n.
\end{eqnarray}
Moreover, for any $x\in \mathbb{R}^n$,  using Lemma~\ref{peng-l5.2}, we have
\begin{eqnarray*}
&&\left\langle P(t)x,x\right\rangle \\
&=&\left\langle G(T)\Phi(T,t)x,\Phi(T,t)x\right\rangle -\int_t^T\left\langle  \dot{G}(s)\Phi(T,t)x,\Phi(T,t)x\right\rangle ds\nonumber\\
&&+\int_t^T\left\langle \left[P(s)B(s)M^{-1}(s,s)B^{\top}(s)P(s)+Q(s,s)-S^{\top}(s,s)M^{-1}(s,s)S(s,s)\right]\Phi(s,t)x,\Phi(s,t)x\right\rangle ds\nonumber\\
&&-\int_t^T\int_{s}^{T}\left\langle \left[Q_s(s,\tau)+\Upsilon^{\top}(\tau)M_s(s,\tau) \Upsilon(\tau)-2\Upsilon^{\top}(\tau)S_s(s,\tau)\right]\Phi(\tau,t)x,\Phi(\tau,t)x\right\rangle d\tau ds\\
&=&\left\langle G(t)\Phi(T,t)x,\Phi(T,t)x\right\rangle +\int_t^T\left\langle \left[Q(t,s)-2\Upsilon^{\top}(s)S(t,s)+\Upsilon^{\top}(s)M(t,s)\Upsilon(\tau)\right]\Phi(s,t)x,\Phi(s,t)x\right\rangle ds \\
&=&\left\langle G(t)\Phi(T,t)x,\Phi(T,t)x\right\rangle +\int_t^T\left\langle \left[Q(t,s)-S^{\top}(t,s)M^{-1}(t,s)S(t,s)\right]\Phi(s,t)x,\Phi(s,t)x\right\rangle ds \\
&&+\int_t^T\left\langle M(t,s)\left(\Upsilon(s)-M^{-1}(t,s)S(t,s)\right)\Phi(s,t)x,\left(\Upsilon(s)-M^{-1}(t,s)S(t,s)\right)\Phi(s,t)x\right\rangle ds.
\end{eqnarray*}
Thanks to (H2), (H3) and (H5), we thus have that
\begin{eqnarray}\label{7.16}
\left\langle P(t)x,x\right\rangle\geq 0\mbox{ for all }(t,x)\in [t_{12},T]\times\mathbb{R}^n.
\end{eqnarray}
Moreover, it follows from (\ref{7.15}) that
\begin{eqnarray*}
\langle P(t)x,x\rangle&=&\langle G(T)\Psi(T,t)x,\Psi(T,t)x\rangle+\int_t^T\bigg\langle\bigg[Q(s,s)-\mathbb{F}(s;s,P)\nonumber\\
&&-\left(B^{\top}(s)P(s)+S(s,s)\right)^{\top}M^{-1}(s,s)\left(B^{\top}(s)P(s)+S(s,s)\right)\bigg]\Psi(s,t)x,\Psi(s,t)x\bigg\rangle ds\nonumber\\
&\leq&\langle G(T)\Psi(T,t)x,\Psi(T,t)x\rangle+\int_t^T \langle Q(s,s)\Psi(s,t)x,\Psi(s,t)x \rangle ds,\;\;\forall(t,x)\in[0,T]\times \mathbb{R}^n
\end{eqnarray*}
which, together with (\ref{7.16}), yields (\ref{7.14}) and thus completing the proof.
\end{proof}\par 

The main theorem in this section demonstrates the unique solvability of the Riccati equation.
\begin{theorem}\label{peng-theorem5.1} The equilibrium Riccati equation (\ref{1.8}) admits a unique symmetric solution $P(\cdot)\in C\left([0,T];\mathbb{R}^{n\times n}\right)$. Moreover, $P$ is positive semi-definite.
\end{theorem}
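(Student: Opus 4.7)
The plan is to combine a Banach fixed-point argument for local existence and uniqueness with the a priori bound from Lemma \ref{peng-l5.3} and a stepwise extension procedure to cover all of $[0,T]$. Positive semi-definiteness is then free from Lemma \ref{peng-l5.3}.

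By Lemma \ref{peng-l5.2}, a symmetric $P \in C([t_1,T];\mathbb{R}^{n \times n})$ solves (\ref{1.8}) if and only if it solves the integral equation (ii). Fix $R > \|G(T)\|$ and consider the complete metric space
\[
\mathcal{B}_{t_1} = \left\{P \in C([t_1,T];\mathbb{R}^{n\times n}) : P^{\top} = P,\ \|P\|_{C} \le R\right\}
\]
with the uniform norm, and let $(HP)(t)$ denote the right-hand side of the integral equation in Lemma \ref{peng-l5.2}(ii). The quadratic term $(B^{\top}P + S)^{\top} M^{-1}(B^{\top}P + S)$ is globally Lipschitz in $P$ on $\mathcal{B}_{t_1}$, and $\mathbb{F}(\cdot;\cdot,P)$ is uniformly bounded by the same computation that gives (\ref{7.11}) and Lipschitz in $P$ by (\ref{7.6}), with constants depending only on $R$ and the problem data. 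Direct estimation therefore yields $\|HP - G(T)\|_{C} \le C_{1}(T - t_{1})$ and $\|HP_{1} - HP_{2}\|_{C} \le C_{2}(T - t_{1})\|P_{1} - P_{2}\|_{C}$ for constants $C_{1}, C_{2}$ depending only on $R$ and the data. Choosing $T - t_{1}$ small enough, $H$ is a contracting self-map of $\mathcal{B}_{t_1}$, producing a unique symmetric fixed point, hence a unique symmetric solution of (\ref{1.8}) on $[t_{1},T]$.

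Now set $K := e^{2\|A\|_{L^{1}}}(\|G(T)\| + T\|Q\|_{C})$, the a priori bound from (\ref{7.14}), and enlarge $R$ at the outset so that $R > K$. Having constructed a symmetric solution $P$ on $[t_{k},T]$ with $\|P\|_{C} \le K$ (guaranteed by Lemma \ref{peng-l5.3}), extend to $[t_{k+1},T]$ for some $t_{k+1} < t_{k}$ by the same fixed-point scheme applied on $[t_{k+1}, t_{k}]$, now splicing the already known values of $P$ on $[t_{k},T]$ into the nonlocal integrand $\mathbb{F}(s;s,P) = \int_{s}^{T} \cdots$ whenever $s \in [t_{k+1},t_{k}]$. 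Because the Lipschitz and boundedness constants in (\ref{7.6})--(\ref{7.11}) depend only on $R$, the contraction step size can be chosen as a universal $\delta > 0$ determined by $K$ and the data alone. After at most $\lceil T/\delta \rceil$ steps the construction reaches $t = 0$, yielding a symmetric solution on $[0,T]$; uniqueness propagates from the contraction at each step, and positive semi-definiteness is immediate from Lemma \ref{peng-l5.3}.

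The main obstacle is precisely the nonlocal character of the equation: $\bar{Q}(s,s)$ involves $P$ over the entire interval $[s,T]$, so standard ODE continuation theorems do not apply and, a priori, the Lipschitz constant in (\ref{7.6}) scales with $\|P\|_{C}$ and with interval length. The a priori bound of Lemma \ref{peng-l5.3} is what resolves this, by freezing the relevant norms in terms of the data; without it, the step size $\delta$ could shrink at each extension and prevent one from ever reaching $t = 0$. The rest of the argument is a careful but routine implementation of contraction mapping and splicing.
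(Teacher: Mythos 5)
Your proposal is correct and follows essentially the same route as the paper: a Banach fixed-point argument on the integral form from Lemma \ref{peng-l5.2}, with the a priori bound and positive semi-definiteness supplied by Lemma \ref{peng-l5.3}, and a stepwise extension of uniform length (splicing the already-constructed tail of $P$ into the nonlocal term $\mathbb{F}$) until $t=0$ is reached. The only cosmetic difference is that you center the invariant ball at $0$ with radius $R>K$, whereas the paper centers it at $G(T)$ (respectively at $\mathbb{P}_{k-1}(T-(k-1)\tau)$) with radius $r$; both choices yield the same contraction and uniqueness conclusion.
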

\begin{proof}
The proof can be divided into two steps. In Step $1$, we prove the local existence and uniqueness of the solution to the Riccati equation (\ref{1.8}), while we extend the result to the global case in Step 2.\par 
For the ease of exposition, we introduce the following notations, which will be used in the proof without further clarifications.\par 
 Denote by $\mathbb{B}(\eta,r_0)$ the closed ball (in $\mathbb{R}^{n\times n}$) centered at $\eta$ and of radius $r_0 >0$. Define 
\begin{eqnarray}\label{7.18}
\left\{\begin{array}{ll}
r=e^{2\|A\|_{L^1}}\left( \|G\|_C+T\|Q\|_C\right),\\
\bar{\rho}=\|M^{-1}\|_{C}\left(4r\|B\|_{L^{\infty}}+\|S\|_C\right),\\
\bar{\beta}=\|A\|_{L^1}+T\bar{\rho}\|B\|_{L^{\infty}},\\
\bar{\omega}=\|A\|_{L^1}+2T\bar{\rho}\|B\|_{L^{\infty}},\\
\bar{\gamma}=\left\|M^{-1}\right\|_C\left(1+\|B\|_{L^{\infty}}\right)^2\bigg[\|G\|_{C^1}+T\|Q\|_{C^1}+
(1+2T\bar{\rho})\left(2\bar{\rho}\|M\|_{C^1}+\|S\|_{C^1}\right)\bigg].
\end{array}\right.
\end{eqnarray}
For $\Psi$,  we can find a constant $\tau_1>0$ such that
\begin{eqnarray}\label{7.21}
\|\Psi(t,s)-I\|\leq \frac{1}{2\left(1+e^{2\bar{\beta}}\right)} \mbox{ for any } s,t\in [0,T] \mbox{ with } |s-t|\leq \tau_1,
\end{eqnarray}
where the existence of $\tau_1$ follows from the uniform continuity of $\Psi.$\par 
Moreover, we define
\begin{eqnarray}\label{7.22}
\left\{\begin{array}{ll}
\tau_2=\frac{r}{2e^{4\bar{\beta}}\left(\bar{\rho}^2\left\|M\right\|_C+\|G\|_{C^1}+T\|Q\|_{C^1}
+T\bar{\rho}\left(\bar{\rho}\|M\|_{C^1} +2\|S\|_{C^1}\right)+\|Q\|_C\right)},\\
\tau_3=\frac{1}{4e^{2\|A\|_{L^1}}\left(\bar{\rho}\|B\|_{L^{\infty}}
+2T\bar{\gamma} e^{4\bar{\omega}}\right)}
\end{array}\right.
\end{eqnarray}
and
\begin{eqnarray}\label{7.23}
\tau=\min\{\tau_1,
\tau_2,
\tau_3\}.
\end{eqnarray}
It is easy to see that $\tau>0.$\par 
\textit{Step $1$}.
Define a map $\mathbb{H}_1$ on $C([T-\tau,T];\mathbb{B}(G(T),r))$ given by
\begin{eqnarray}\label{7.24}
(\mathbb{H}_1P)(t)&=&\Psi^{\top}(T,t)G(T)\Psi(T,t)+\int_t^T\Psi^{\top}(s,t)\bigg[Q(s,s)-\mathbb{F}(s;s,P)\\
&&-\left(B^{\top}(s)P(s)+S(s,s)\right)^{\top}M^{-1}(s,s)\left(B^{\top}(s)P(s)+S(s,s)\right)\bigg]\Psi(s,t)ds.\nonumber
\end{eqnarray}
Following (\ref{7.11}) and using the notations defined by (\ref{7.18})- (\ref{7.23}), we have 
\begin{eqnarray*}\label{7.20}
\|\mathbb{F}(t;s,P)\|\leq e^{2\bar{\beta}}\left[\|G\|_{C^1}+T\|Q\|_{C^1}
+T\bar{\rho}\left(\bar{\rho}\|M\|_{C^1} +2\|S\|_{C^1}\right)\right]\mbox{ for any }P\in C([0,T];\mathbb{B}(0,2r)).
\end{eqnarray*}
Thus,
\begin{eqnarray}\label{7.25}
&&\|(\mathbb{H}_1P)(t)-G(T)\|\nonumber\\
&=&\bigg\|(\Psi(T,t)-I)^{\top}G(T)\Psi(T,t)+G(T)(\Psi(T,t)-I)+\int_t^T\Psi^{\top}(s,t)\bigg[Q(s,s)-\mathbb{F}(s;s,P)\nonumber\\
&&-\left(B^{\top}(s)P(s)+S(s,s)\right)^{\top}M^{-1}(s,s)\left(B^{\top}(s)P(s)+S(s,s)\right)\bigg]\Psi(s,t)ds\bigg\|\nonumber\\
&\leq&\left(1+e^{2\bar{\beta}}\right)\|G(T)\|\|\Psi(T,t)-I\|+\int_t^T\|\Psi\|_C^2\bigg[\|Q\|_C+\|\mathbb{F}(s;s,P)\|\nonumber\\
&&+\left\|B^{\top}(s)P(s)+S(s,s)\right\|^2\left\|M^{-1}(s,s)\right\|\bigg]ds\nonumber\\
&\leq & e^{4\bar{\beta}}\left(\bar{\rho}^2\left\|M\right\|_C+\|G\|_{C^1}+T\|Q\|_{C^1}
+T\bar{\rho}\left(\bar{\rho}\|M\|_{C^1} +2\|S\|_{C^1}\right)+\|Q\|_C\right)(T-t)\nonumber\\
&&+\left(1+e^{2\bar{\beta}}\right)\|G(T)\|\|\Psi(T,t)-I\|\nonumber\\
&\leq& r
\end{eqnarray}
for all $P\in C([T-\tau,T];\mathbb{B}(G(T),r))$.\par 
This shows that $\mathbb{H}_1C([T-\tau,T];\mathbb{B}(G(T),r))\subseteq C([T-\tau,T];\mathbb{B}(G(T),r))$.\par 
Furthermore, for any $P_1,P_2\in C([T-\tau,T];\mathbb{B}(G(T),r))$, it is easy to see from (\ref{7.24}) that
\begin{eqnarray*}
(\mathbb{H}_1P_1)(t)-(\mathbb{H}_1P_2)(t)&=&\int_t^T\Psi^{\top}(s,t)\bigg[(P_2(s)-P_1(s))B(s)M^{-1}(s,s)\left(B^{\top}(s)P_2(s)+S(s,s)\right)\nonumber\\
&&+\left(B^{\top}(s)P_1(s)+S(s,s)\right)^{\top}M^{-1}(s,s)B^{\top}(s)(P_2(s)-P_1(s))\\
&&+\mathbb{F}(s;s,P_2)-\mathbb{F}(s;s,P_1)\bigg]\Psi(s,t)ds.\nonumber
\end{eqnarray*}
As a result,
\begin{eqnarray}\label{7.26}
\|(\mathbb{H}_1P_1)(t)-(\mathbb{H}_1P_2)(t)\|&\leq&\int_t^T\|\Psi(s,t)\|^2\bigg[2\left\|B^{\top}(s)P_1(s)+S(s,s)\right\|\left\|M^{-1}(s,s)\right\|\|B(s)\|\|P_2(s)-P_1(s)\|\nonumber\\
&&+\|\mathbb{F}(s;s,P_2)-\mathbb{F}(s;s,P_1)\|\bigg]ds.
\end{eqnarray}
It follows from (\ref{7.6}) that
\begin{eqnarray}\label{7.19}
\|\mathbb{F}(s;t,P_2)-\mathbb{F}(s;t,P_1)\|\leq 4T\bar{\gamma} e^{4\bar{\omega}}
\|P_1-P_2\|_{C}\mbox{ for any }s,t\in [0,T],P_1,P_2\in C([0,T];\mathbb{B}(0,2r))
\end{eqnarray}
Combining (\ref{7.26}) and (\ref{7.19}) and the notations defined in (\ref{7.22}), (\ref{7.23}), we obtain
\[
\|(\mathbb{H}_1P_1)(t)-(\mathbb{H}_1P_2)(t)\|\leq  \frac{1}{2}\|P_2-P_1\|_{C}\mbox{ for any }P_1,P_2\in C([T-\tau,T];\mathbb{B}(G(T),r)).
\]
The above inequality and (\ref{7.25}) shows that $\mathbb{H}_1$ is a contraction mapping from $C([T-\tau,T];\mathbb{B}(G(T),r))$ to itself. Therefore, it follows from the Banach fixed point theorem that, there is a unique $\mathbb{P}_1\in C([T-\tau,T];\mathbb{B}(G(T),r))$ such that
\begin{eqnarray*}\label{7.27}
\mathbb{P}_1(t)&=&\Psi^{\top}(T,t)G(T)\Psi(T,t)+\int_t^T\Psi^{\top}(s,t)\bigg[Q(s,s)-\mathbb{F}(s;s,\mathbb{P}_1)\\
&&-\left(B^{\top}(s)\mathbb{P}_1(s)+S(s,s)\right)^{\top}M^{-1}(s,s)\left(B^{\top}(s)\mathbb{P}_1(s)+S(s,s)\right)\bigg]\Psi(s,t)ds, \;\;\forall t\in [T-\tau,T].\nonumber
\end{eqnarray*}
Hence, according to Lemma ~\ref{peng-l5.2}, we have that $\mathbb{P}_1$ solves the Riccati equation (\ref{1.8}) on $[T-\tau,T]$, and the positive semi-definiteness of $\mathbb{P}_1$ follows from Lemma~\ref{peng-l5.3}.\par
\textit{Step $2$.}
For any fixed $P\in C\left([T-2\tau,T-\tau];\mathbb{R}^{n\times n}\right)$, define
\begin{eqnarray}\label{7.29}
\left\{\begin{array}{ll}
\hat{P}_1(t)=\left\{\begin{array}{ll}
P(t),&t\in [T-2\tau, T-\tau),\\
\mathbb{P}_1(t),&t\in [T-\tau,T],
\end{array}\right.\\
\hat{\Upsilon}_1(t)=M^{-1}(t,t)\left(B^{\top}(t)\hat{P}_1(t)+S(t,t)\right),\\
\hat{\Phi}_1(t,s)=\exp\left(\int_{s}^{t}\left(A(\tau)-B(\tau)M^{-1}(\tau,\tau)\left(B^{\top}(\tau)\hat{P}_1(\tau)+S(\tau,\tau)\right)\right)d\tau\right),~~\forall s,t\in[T-2\tau,T],
\end{array}\right.
\end{eqnarray}
 and
\begin{eqnarray}\label{7.30}
\mathbb{F}_1(t;s,\hat{P}_1)&=&\hat{\Phi}_1^{\top}(T,t)\dot{G}(s)\hat{\Phi}_1(T,t)+\int_{t}^{T}\hat{\Phi}_1^{\top}(\tau,t)Q_s(s,\tau)\hat{\Phi}_1(\tau,t)d\tau\\
&&+\int_{t}^{T}\hat{\Phi}_1^{\top}(\tau,t)\bigg[\hat{\Upsilon}_1^{\top}(\tau)M_s(s,\tau) \hat{\Upsilon}_1(\tau)-\hat{\Upsilon}_1^{\top}(\tau)S_s(s,\tau)-S^{\top}_s(s,\tau)\hat{\Upsilon}_1(\tau)\bigg]\hat{\Phi}_1(\tau,t)d\tau\nonumber
\end{eqnarray}
for $s,t\in[T-2\tau,T]$.\par
 We introduce a map $\mathbb{H}_2$ on $C([T-2\tau,T-\tau];\mathbb{B}(\mathbb{P}_1(T-\tau),r))$ given by
\begin{eqnarray*}\label{7.31}
(\mathbb{H}_2P)(t)&=&\Psi^{\top}(T-\tau,t)\mathbb{P}_1(T-\tau)\Psi(T-\tau,t)+\int_t^{T-\tau}\Psi^{\top}(s,t)\bigg[Q(s,s)-\mathbb{F}_1(s;s,\hat{P}_1)\\
&&-\left(B^{\top}(s)P(s)+S(s,s)\right)^{\top}M^{-1}(s,s)\left(B^{\top}(s)P(s)+S(s,s)\right)\bigg]\Psi(s,t)ds.\nonumber
\end{eqnarray*}
Moreover, combining (\ref{7.18}), (\ref{7.29}) and (\ref{7.30}), we  have
\begin{eqnarray*}\label{7.32}
&&\|(\mathbb{H}_2P)(t)-\mathbb{P}_1(T-\tau)\|\nonumber\\
&\leq&\|(\Psi(T-\tau,t)-I)^{\top}\mathbb{P}_1(T-\tau)\Psi(T-\tau,t)+\mathbb{P}_1(T-\tau)(\Psi(T-\tau,t)-I)\|\nonumber\\
&&+\int_t^{T-\tau}\|\Psi(s,t)\|^2\bigg[\|Q(s,s)\|+\|\mathbb{F}_1(s;s,\hat{P}_1)\|+\left\|B^{\top}(s)P(s)+S(s,s)\right\|^2\left\|M^{-1}(s,s)\right\|\bigg]ds\nonumber\\
&\leq & e^{4\bar{\beta}}\bigg(\bar{\rho}^2\left\|M\right\|_C+\|G\|_{C^1}+\|Q\|_C+T\|Q\|_{C^1}
+T\bar{\rho}\left(\bar{\rho}\|M\|_{C^1} +2\|S\|_{C^1}\right)\bigg)(T-\tau-t)\\
&&+\left(1+e^{2\bar{\beta}}\right)\|\mathbb{P}_1(T-\tau)\|\|\Psi(T-\tau,t)-I\|\nonumber
\end{eqnarray*}
and
\begin{eqnarray*}\label{7.33}
(\mathbb{H}_2P_1)(t)-(\mathbb{H}_2P_2)(t)&=&\int_t^T\Psi^{\top}(s,t)\bigg[(P_2(s)-P_1(s))B(s)M^{-1}(s,s)\left(B^{\top}(s)P_2(s)+S(s,s)\right)\nonumber\\
&&+\left(B^{\top}(s)P_1(s)+S(s,s)\right)^{\top}M^{-1}(s,s)B^{\top}(s)(P_2(s)-P_1(s))\\
&&+\mathbb{F}(s;s,\hat{P}_2)-\mathbb{F}(s;s,\hat{P}_1)\bigg]\Psi(s,t)ds\nonumber
\end{eqnarray*}
for all $P_1, P_2\in C([T-2\tau,T-\tau];\mathbb{B}(\mathbb{P}_1(T-\tau),r))$.\par 
 It follows from  (\ref{7.18}), (\ref{7.23}),  (\ref{7.29}) and (\ref{7.30})
that $\mathbb{H}_2C([T-2\tau,T-\tau];\mathbb{B}(\mathbb{P}_1(T-\tau),r))\subseteq C([T-2\tau,T-\tau];\mathbb{B}(\mathbb{P}_1(T-\tau),r))$ and
\[
\|(\mathbb{H}_2P_1)(t)-(\mathbb{H}_2P_2)(t)\|\leq  \frac{1}{2}\|P_2-P_1\|_{C}\mbox{ for any }P_1,P_2\in C([T-2\tau,T-\tau];\mathbb{B}(\mathbb{P}_1(T-\tau),r)).
\]
Thus, it follows from the Banach fixed point theorem that, there is a unique $\mathbb{P}_2\in C([T-2\tau,T-\tau];\mathbb{B}(\mathbb{P}_1(T-\tau),r))$  such that
\begin{eqnarray}\label{7.34}
\mathbb{P}_2(t)&=&\Psi^{\top}(T-\tau,t)\mathbb{P}_1(T-\tau)\Psi(T-\tau,t)+\int_t^{T-\tau}\Psi^{\top}(s,t)\bigg[Q(s,s)-\mathbb{F}(s;s,\hat{\mathbb{P}}_2)\\
&&-\left(B^{\top}(s)\mathbb{P}_2(s)+S(s,s)\right)^{\top}M^{-1}(s,s)\left(B^{\top}(s)\mathbb{P}_2(s)+S(s,s)\right)\bigg]\Psi(s,t)ds, \;\;\forall t\in [T-2\bar{t},T-\tau],\nonumber
\end{eqnarray}
It is easy to see from (\ref{7.34}) that $\mathbb{P}_2$ is symmetric. Also, Lemma \ref{peng-l5.2} yields that $\mathbb{P}_2$ solves the Riccati equation (\ref{1.8}) and Lemma  \ref{peng-l5.3} shows that $\mathbb{P}_2$ is positive semi-definite.\par
Continue the above process, we then obtain the symmetric matrix-valued function $\{\mathbb{P}_k|k=1,2,\cdots, \left[\frac{T}{\tau}\right], \left[\frac{T}{\tau}\right]+1\}$ given by
\begin{eqnarray*}\label{7.36}
\mathbb{P}_k(t)&=&\Psi^{\top}(T-(k-1)\tau,t)\mathbb{P}_{k-1}(T-(k-1)\tau)\Psi(T-(k-1)\tau,t)+\int_t^{T-(k-1)\tau}\Psi^{\top}(s,t)\bigg[Q(s,s)-\mathbb{F}(s;s,\hat{\mathbb{P}}_k)\nonumber\\
&&-\left(B^{\top}(s)\mathbb{P}_k(s)+S(s,s)\right)^{\top}M^{-1}(s,s)\left(B^{\top}(s)\mathbb{P}_k(s)+S(s,s)\right)\bigg]\Psi(s,t)ds
\end{eqnarray*}
for all $ t\in [T-k\tau,T-(k-1)\tau]$ and
\begin{eqnarray*}\label{7.37}
\mathbb{P}_j(t)&=&\Psi^{\top}(T-(j-1)\tau,t)\mathbb{P}_{j-1}(T-(j-1)\tau)\Psi(T-(j-1)\tau,t)+\int_t^{T-(j-1)\tau}\Psi^{\top}(s,t)\bigg[Q(s,s)-\mathbb{F}(s;s,\hat{\mathbb{P}}_j)\nonumber\\
&&-\left(B^{\top}(s)\mathbb{P}_j(s)+S(s,s)\right)^{\top}M^{-1}(s,s)\left(B^{\top}(s)\mathbb{P}_j(s)+S(s,s)\right)\bigg]\Psi(s,t)ds
\end{eqnarray*}
for all $ t\in [0,T-\left[\frac{T}{\tau}\right]\tau]$, where $j=\left[\frac{T}{\tau}\right]+1$.
Here
\begin{eqnarray*}\label{7.38}
\left\{\begin{array}{ll}
\hat{\mathbb{P}}_k(t)=\left\{\begin{array}{ll}
\mathbb{P}_k(t),&t\in [T-k\tau, T-(k-1)\tau),\\
\vdots\\
\mathbb{P}_1(t),&t\in [T-\tau,T],
\end{array}\right.\\
\hat{\Upsilon}_k(t)=M^{-1}(t,t)\left(B^{\top}(t)\hat{\mathbb{P}}_k(t)+S(t,t)\right),\\
\hat{\Phi}_k(t,s)=\exp\left(\int_{s}^{t}\left(A(\tau)-B(\tau)M^{-1}(\tau,\tau)\left(B^{\top}(\tau)\hat{\mathbb{P}}_k(\tau)+S(\tau,\tau)\right)\right)d\tau\right),~~\forall s,t\in[T-k\tau,T],
\end{array}\right.
\end{eqnarray*}
 and
\begin{eqnarray*}\label{7.39}
\mathbb{F}(t;s,\hat{\mathbb{P}}_k)&=&\hat{\Phi}_k^{\top}(T,t)\dot{G}(s)\hat{\Phi}_k(T,t)+\int_{t}^{T}\hat{\Phi}_k^{\top}(\tau,t)Q_s(s,\tau)\hat{\Phi}_k(\tau,t)d\tau\\
&&+\int_{t}^{T}\hat{\Phi}_k^{\top}(\tau,t)\bigg[\hat{\Upsilon}_k^{\top}(\tau)M_s(s,\tau) \hat{\Upsilon}_k(\tau)-\hat{\Upsilon}_k^{\top}(\tau)S_s(s,\tau) - \hat{\Upsilon}_k(\tau)S_s^{\top}(s,\tau)\bigg]\hat{\Phi}_k(\tau,t)d\tau,\;\;s,t\in[0,T].\nonumber
\end{eqnarray*}
We define
\begin{eqnarray*}\label{7.40}
P(t)=\left\{\begin{array}{ll}
\mathbb{P}_1(t),&t\in [T-\tau,T],\\
\vdots\\
\mathbb{P}_k(t),&t\in [T-k\tau, T-(k-1)\tau),\\
\vdots\\
\mathbb{P}_j(t),&t\in [0,T-\left[\frac{T}{\tau}\right]\tau).
\end{array}\right.
\end{eqnarray*}
Moreover, $P\in C\left([0,T];\mathbb{R}^{n\times n}\right)$ satisfies the Riccati integral equation (\ref{7.11}) and
 $P$ is symmetric positive semi-definite.
This implies that  the Riccati differential equation (\ref{1.8}) has a unique symmetric solution $P\in C\left([0,T];\mathbb{R}^{n\times n}\right)$.
  This completes the proof.
\end{proof}
\subsection{Existence and Uniqueness of the linear equilibrium}
Having obtained the equivalence result and the solvability of the equilibrium Riccati equation, we are poised to show the existence and uniqueness of the linear equilibrium for the time inconsistent LQ problem.
\begin{theorem} The time-inconsistent LQ problem in Definition \ref{DefEquilibrium} admits a unique linear equilibrium.
\end{theorem}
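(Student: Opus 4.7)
The plan is to derive this theorem as a direct assembly of the two earlier main results in the paper: the equivalence Theorem \ref{peng-theorem1.1} and the unique solvability of the equilibrium Riccati equation in Theorem \ref{peng-theorem5.1}. No new estimates are needed beyond careful bookkeeping of what was constructed in Propositions \ref{peng-proposition2.1}, \ref{peng-proposition2.2} and \ref{peng-proposition2.3}.

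For existence, I would invoke Theorem \ref{peng-theorem5.1} to obtain the unique symmetric positive semi-definite solution $P\in C([0,T];\mathbb{R}^{n\times n})$ of the Riccati equation (\ref{1.8}). Then applying the implication (iii)$\Rightarrow$(i) of Theorem \ref{peng-theorem1.1}, quantified by Proposition \ref{peng-proposition2.3}, yields the explicit linear equilibrium
\begin{equation*}
\bar{u}(t,x) = -M^{-1}(t,t)\left(B^{\top}(t)P(t) + S(t,t)\right)x, \quad (t,x)\in[0,T]\times\mathbb{R}^{n}.
\end{equation*}

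For uniqueness, suppose $\bar{u}_1$ and $\bar{u}_2$ are two linear equilibria, with gain functions $\tilde{u}_1,\tilde{u}_2:[0,T]\to\mathbb{R}^{m\times n}$. I would retrace Step 1 of Proposition \ref{peng-proposition2.1}: the variational characterisation of equilibria forces each $\bar{u}_i$ to satisfy the pointwise representation (\ref{2.11}), namely $\bar{u}_i(t,x) = -M^{-1}(t,t)\bigl(B^{\top}(t)\tilde{P}_i(t) + S(t,t)\bigr)x$, with $\tilde{P}_i$ built from $\tilde{u}_i$ via (\ref{2.4})--(\ref{2.10}). Step 2 of Proposition \ref{peng-proposition2.1} together with Proposition \ref{peng-proposition2.2} then shows that each $\tilde{P}_i$ is a symmetric element of $C([0,T];\mathbb{R}^{n\times n})$ solving the equilibrium Riccati equation (\ref{1.8}). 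The uniqueness half of Theorem \ref{peng-theorem5.1} forces $\tilde{P}_1\equiv\tilde{P}_2$, so $\bar{u}_1\equiv\bar{u}_2$.

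Since the substantive work has already been carried out in the earlier sections, I do not anticipate a genuine obstacle. The only point requiring care is verifying that the object $\tilde{P}_i$ produced inside Proposition \ref{peng-proposition2.1}, although displayed for a fixed initial time $t$, depends only on the gain function $\tilde{u}_i$ and the coefficients, and extends to a single continuous solution of (\ref{1.8}) on the whole interval $[0,T]$; but because Propositions \ref{peng-proposition2.1} and \ref{peng-proposition2.2} are stated for every $t\in[0,T)$ and the resulting Riccati equation is autonomous in the initial time, this identification is automatic.
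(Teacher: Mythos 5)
Your existence argument and the overall architecture of your uniqueness argument (reduce uniqueness of the linear equilibrium to uniqueness of the Riccati solution through the representation (\ref{2.11})) coincide with the paper's. The weak link is the middle step of your uniqueness chain: you delegate to ``Step 2 of Proposition \ref{peng-proposition2.1} together with Proposition \ref{peng-proposition2.2}'' the claim that $\tilde{P}_i$ itself solves the equilibrium Riccati equation (\ref{1.8}). Proposition \ref{peng-proposition2.2} does not deliver this. It starts from a solution $(\bar X,\varphi)$ of the boundary value problem (\ref{1.6}) for a single initial vector $x\neq 0$ and defines a matrix $P(t)$ through the single vector identity $\varphi(t)=P(t)\bar X(t)$; feeding in the particular solution $\varphi=\tilde P_i\, X^{\bar u_i}_{0,x}$ from Step 2 of Proposition \ref{peng-proposition2.1} only tells you that $P(t)$ and $\tilde P_i(t)$ agree on the one-dimensional span of $\bar X(t)$ at each $t$. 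That is not enough to conclude $P\equiv\tilde P_i$, hence not enough to conclude that $\tilde P_i$ solves (\ref{1.8}); consequently the identity $P_1\equiv P_2$ you would extract from Theorem \ref{peng-theorem5.1} does not transfer to $\tilde P_1\equiv\tilde P_2$, which is what you actually need to equate the gains $\tilde u_1$ and $\tilde u_2$.

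The gap is fixable in two ways. The paper's route avoids Proposition \ref{peng-proposition2.2} altogether: it substitutes the expressions of $\tilde\Phi$ and $\tilde Q$ in terms of $\tilde P$ (coming from (\ref{2.11})) into the ODE (\ref{2.13}) satisfied by $\tilde P$, notes $\tilde P(T)=G(T)$ and the symmetry of $\tilde P$ from (\ref{2.10}), and verifies by direct computation that the resulting equation is exactly (\ref{1.8}); uniqueness of $\tilde P$, and hence of the equilibrium, then follows from Theorem \ref{peng-theorem5.1}. Alternatively, you could keep your route but run the reconstruction of Proposition \ref{peng-proposition2.2} with $n$ linearly independent initial states $x_1,\dots,x_n$ and the specific solutions $\varphi_j=\tilde P_i\,\bar X_j$: since $\bar X_j(t)=\tilde\Phi(t,0)x_j$ and $\tilde\Phi(t,0)$ is invertible, the vectors $\bar X_j(t)$ span $\mathbb{R}^n$ for every $t$, so the reconstructed $P(t)$ is forced to equal $\tilde P_i(t)$ everywhere. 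Either way, an explicit identification of $\tilde P_i$ with a solution of (\ref{1.8}) must be supplied; it is not ``automatic'' as your final paragraph asserts.
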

\begin{proof}
The existence of the linear equilibrium is an immediate result of Proposition~\ref{peng-proposition2.3}.\par 
To prove the uniqueness, we let $\bar{u}(t,x)=\tilde{u}(t)x$ for all $(t,x)\in [0,T]\times \mathbb{R}^n$ denote a linear equilibrium of the LQ problem.\par 
Then (\ref{2.11}) yields that
\begin{eqnarray} \label{pc1}
\tilde{u}(t)=-M^{-1}(t,t)\left(B^{\top}(t)\tilde{P}(t)+S(t,t)\right)\mbox{ for all }t\in [0,T].
\end{eqnarray}
It now suffices to prove the uniqueness of $\tilde{P}$.\par 
It follows from (\ref{2.4}) and (\ref{2.3}) that
\begin{eqnarray} \label{pc2}
\left\{\begin{array}{ll}
\tilde{\Phi} (s,t)=\exp\left(\int_{t}^{s}\left(A(\nu)-B(\nu)M^{-1}(t,t)\left(B^{\top}(\nu)\tilde{P}(\nu)+S(\nu,\nu)\right)\right)d\nu\right) ,\forall t,s\in [0,T],\\
\tilde{Q}(t,s)=\left(B^{\top}(s)\tilde{P}(s)+S(s,s)\right)^{\top}M^{-1}(s,s)M(t,s)M^{-1}(s,s)\left(B^{\top}(s)\tilde{P}(s)+S(s,s)\right)\\
\qquad\qquad+Q(t,s)-\left(B^{\top}(s)\tilde{P}(s)+S(s,s)\right)^{\top}M^{-1}(s,s)S(t,s)\\
\qquad\qquad-S^{\top}(t,s)M^{-1}(s,s)\left(B^{\top}(s)\tilde{P}(s)+S(s,s)\right),\forall s\in [t,T].
\end{array}\right.
\end{eqnarray}
Let
\[\tilde{\Upsilon}(s)=M^{-1}(s,s)\left(B^{\top}(s)P(s)+S(s,s)\right), \forall s\in[0,T].\]
Then one can see from (\ref{pc2}) that
\begin{eqnarray} \label{pc3}
&&Q(t,t)-\tilde{\Phi}^{\top}(T,t)\dot{G}(t)\tilde{\Phi}(T,t)-\int_{t}^{T}\tilde{\Phi}^{\top}(s,t)\tilde{Q}_t(t,s)\tilde{\Phi}(s,t) ds\nonumber\\
&=&Q(t,t)-\tilde{\Phi}^{\top}(T,t)\dot{G}(t)\tilde{\Phi}(T,t)-\int_{t}^{T}\tilde{\Phi}^{\top}(s,t)\bigg[\tilde{\Upsilon}^{\top}(s)M_t(t,s)\tilde{\Upsilon}(s)\\
&&+Q_t(t,s)-\tilde{\Upsilon}^{\top}(s)S_t(t,s)-S_t^{\top}(t,s)\tilde{\Upsilon}(s)\bigg]\tilde{\Phi}(s,t) ds,\forall t\in[0,T].\nonumber
	\end{eqnarray}
Using (\ref{2.10}), we have $\tilde{P}(T)=G(T)$ and $\tilde{P}^{\top}(t)=\tilde{P}(t)$ for all $t\in [0,T]$. Moreover, plug  (\ref{pc1}) and (\ref{pc2}), (\ref{pc3}) into  (\ref{2.13}), we then have
\begin{eqnarray*} \label{pc4} 
\left\{\begin{array}{ll}
\dot{\tilde{P}}(t)+A^{\top}(t)\tilde{P}(t)+\tilde{P}(t)A(t)-\left(B^{\top}(t)\tilde{P}(t)+S(t,t)\right)^{\top}M^{-1}(t,t)\left(B^{\top}(t)\tilde{P}(t)+S(t,t)\right)\\
+Q(t,t)-\tilde{\Phi}^{\top}(T,t)\dot{G}(t)\tilde{\Phi}(T,t)-\int_{t}^{T}\tilde{\Phi}^{\top}(s,t)\bigg[Q_t(t,s)+\tilde{\Upsilon}^{\top}(s)M_t(t,s)\tilde{\Upsilon}(s)\\
-\tilde{\Upsilon}^{\top}(s)S_t(t,s)-S_t^{\top}(t,s)\tilde{\Upsilon}(s)\bigg]\tilde{\Phi}(s,t) ds=0, \forall t\in[0,T],\\
\tilde{P}(T)=G(T),
\end{array}\right.
	\end{eqnarray*}
which shows $P\in C([0,T];\mathbb{R}^{n\times n})$ is a symmetric solution of the equilibrium Riccati equation (\ref{1.8}). Then the uniqueness of the linear equilibrium follows from Theorem \ref{peng-theorem5.1}. This completes the proof.
\end{proof}
\section{Concluding remarks}\label{Sec:Conclusion}
The equivalence between LQ problems, two-point boundary value problems and Riccati equations plays a significant role in the study of time-consistent LQ problems. We have extended this equivalence result to the time-inconsistent setting. In contrast to the time-inconsistent Riccati equation obtained from the spike variation, the Riccati equation does not involve the time parameter and thus characterising a single dynamics. Thanks to the unique solvability of the Riccati equation and the equivalence, we have obtained the existence and uniqueness of the linear equilibrium for the time-inconsistent LQ problem.\par 
We have to point out that the uniqueness result is only for linear equilibria and deterministic LQ problems. For general time-inconsistent LQ problems, such as stochastic LQ problems, and general equilibria, the uniqueness result remains open.
\bibliography{SiamFinal}        
\bibliographystyle{ecta}  
\end{document}